\documentclass[a4paper]{jpsj2}
\makeatletter
\@ifpackageloaded{graphicx}{}{\usepackage{graphicx}}
\makeatother

\usepackage{amsthm,bm}
\usepackage{mathtools}
\makeatletter
\providecommand{\ext@table}{lot}
\providecommand{\ext@figure}{lof}
\makeatother
\usepackage[hidelinks]{hyperref}
\hypersetup{
    pdftitle={Finite-Mask Gaussian-Process Reconstruction on a Periodic Sensor Ring: Mask-Geometry Dependence, Fourier-Mode Coupling, and Posterior Trace},
    pdfauthor={Jun Tsuzurugi},
    pdfsubject={Author manuscript corresponding to J. Phys. Soc. Jpn. 95, 104002 (2026)}
}

\theoremstyle{definition}
\newtheorem{definition}{Definition}
\theoremstyle{plain}
\newtheorem{theorem}{Theorem}
\newtheorem{proposition}{Proposition}
\newtheorem{corollary}{Corollary}
\theoremstyle{definition}
\newtheorem{remark}{Remark}
\theoremstyle{plain}

\newcommand{\TL}{\mathbb{T}_{L}}
\newcommand{\E}{\mathbb{E}}
\providecommand{\Tr}{\operatorname{Tr}}
\newcommand{\diag}{\operatorname{diag}}

\def\runauthor{J. Tsuzurugi}

\title{

Finite-Mask Gaussian-Process Reconstruction on a\\
Periodic Sensor Ring:\\
Mask-Geometry Dependence, Fourier-Mode Coupling, and Posterior Trace

}

\author{\href{https://orcid.org/0009-0004-6240-4550}{Jun Tsuzurugi}\thanks{\href{mailto:juntuzu@ous.ac.jp}{juntuzu@ous.ac.jp}}}

\inst{Department of Information Science and Engineering,\\
Okayama University of Science, Okayama 700-0005, Japan}

\recdate{June 22, 2026; revised July 18, 2026; accepted August 10, 2026;
published online September 9, 2026}

\abst{

We analyze how an arbitrary observation mask on a finite periodic ring of
equally spaced sensors affects the posterior covariance, Fourier-mode coupling,
and normalized posterior trace in Gaussian-process reconstruction.  Under a
rotationally stationary prior and homogeneous independent measurement noise,
complete observation gives independent scalar posterior formulas for the
Fourier modes.  For an arbitrary mask, however, the matrix
\(Q_M=F D_MF^\ast\) is generally non-diagonal; its off-diagonal entries are
finite Fourier components of the realized mask and couple modes in the
posterior precision.  Consequently, masks with the same unavailable-channel
fraction can have different normalized posterior traces because their
geometries differ.  A dimensionless 64-channel synthetic benchmark illustrates
this finite-matrix effect.  A circumferential array of equally spaced
wall-mounted microphones at a fixed axial station of a circular fan or
compressor duct provides one concrete mechanical-engineering interpretation:
failed, saturated, corrupted, or dropped-out channels form the observation
mask.  The analysis is a finite-dimensional reference calculation under the
stated rotational-stationarity and common-noise assumptions, not a performance
claim for a nonuniform or unequally instrumented operating duct.

}

\begin{document}

\maketitle
\enlargethispage{2pt}

\begin{center}
\small
Author manuscript of J. Phys. Soc. Jpn. \textbf{95}, 104002 (2026).\\
\url{https://doi.org/10.7566/JPSJ.95.104002}
\end{center}

\section{Introduction}

Partially observed finite data are common in Bayesian reconstruction.  Sensor
records may contain dropouts, masked signal or lattice data may be available
only at selected sites, and spatial measurements may be restricted by accessible
sampling locations.  In such situations, complete observation of every lattice
site is an idealization.

Gaussian-process reconstruction is useful because it gives both a posterior
mean and a posterior covariance
\cite{RasmussenWilliams,Bishop,MacKay}.  The posterior covariance measures the
uncertainty that remains after conditioning on the observation mask, and it
connects reconstruction reliability with sampling design and missing-data
assessment.  The practical question is not only how to reconstruct unobserved
values, but how reliable that reconstruction is for the actual measurement
pattern.

The observed-site count gives only a partial answer: regularly distributed and
clustered observations can leave different posterior uncertainties even at the
same missing rate.  Thus the missing rate is a coarse summary, and the posterior
formulation must retain the geometry that enters \(\Sigma_M\) together with the
count.

A finite periodic lattice gives a minimal controlled setting in which this
geometry dependence can be separated from boundary effects.  With a
translation-invariant Gaussian prior, complete observation is exactly diagonal in
the Fourier basis and gives independent scalar Fourier-mode posterior formulas.
Once an arbitrary observation mask is inserted, the likelihood precision is no
longer diagonal in that basis; the mask couples Fourier modes and changes the
posterior uncertainty even though the prior remains Fourier diagonal.

This paper uses the standard finite-dimensional linear-Gaussian posterior
identity as a controlled starting point rather than as a claimed new conditioning
theorem.  The contribution is to embed that identity in the periodic-lattice
Fourier representation with an arbitrary observation mask and to isolate
\(D_M\), or equivalently \(Q_M=F D_MF^\ast\), as an explicit finite matrix.
Under complete observation, \(D_M=I_L\) and the posterior reduces to independent
scalar Fourier modes.  Under a sparse mask, the prior remains diagonal as
\(K=F^\ast\Lambda F\), but the likelihood contribution \(b^{-2}Q_M\) generally
couples those modes.  The matched error is then the normalized posterior trace
\(L^{-1}\operatorname{Tr}\Sigma_M\), replacing the independent scalar-mode error
sum of the complete-observation case.

The finite periodic model serves as a controlled benchmark for partially
observed lattice or signal data.  It keeps the Gaussian-process prior and the
observation geometry separated while remaining exactly finite-dimensional.
Connections with spatial-statistical kriging, Wiener filtering, and masked
finite-signal analysis are made explicit below.

\subsection{Engineering motivation and circumferential microphone-array interpretation}
\label{sec:duct-interpretation}

Circumferential arrays of wall-mounted microphones are used to resolve
circumferential, or spinning, acoustic modes in circular fan and compressor
ducts.  Joppa applied a spatial Fourier transform to an equally spaced array to
measure spinning modes in a turbofan inlet \cite{Joppa1987}; Wang et al. used an
equidistant circular array for in-duct circumferential-mode measurements of an
axial fan/compressor \cite{WangEtAl2014}; and Wang et al. reconstructed a full
cross-spectral matrix from incomplete circumferential measurements before
identifying broadband acoustic modes \cite{WangEtAl2023}.  Such measurements
provide modal information for characterizing duct sound fields and
turbomachinery-noise propagation, and hence are important for noise-source
characterization and noise assessment in mechanical engineering and
aeroacoustics.

Microphone failure, saturation, recording corruption, or communication dropout
produces incomplete circumferential samples.  The resulting task is therefore
not only to reconstruct pressure values at unavailable channels, but also to
quantify the uncertainty conditional on the realized arrangement of those
channels.  This is the engineering connection to the arbitrary finite mask
analyzed below.

A concrete engineering interpretation of the finite periodic sensor ring is a
circumferential microphone array installed at one fixed axial station of an
ideal circular fan or compressor duct.  Let \(L\) nominally identical
wall-mounted microphones be equally spaced at angular positions
\begin{equation}
    \theta_j=\frac{2\pi j}{L},
    \qquad j=0,1,\ldots,L-1.
\end{equation}
The discrete torus \(\mathbb{T}_L\) then represents the angular ordering of the
microphones, including the periodic adjacency between channels \(L-1\) and
zero.  For a simultaneous measurement, let
\(S_j=S(\theta_j)\) denote the mean-removed real-valued acoustic-pressure
fluctuation at the \(j\)-th microphone.  An available channel records
\begin{equation}
    t_j=S_j+\xi_j,
\end{equation}
where the present model assumes independent, identically distributed
measurement noise with variance \(b^2\).

The translation-invariant covariance used below follows from explicit
second-order assumptions.  Suppose that the duct and the nominal sensor layout
are rotationally uniform and that, after mean removal, the probability law of
the pressure fluctuation is invariant under a common circumferential rotation.
Then its covariance depends only on angular separation:
\begin{equation}
\begin{split}
    K_{jj'}
    &=\operatorname{Cov}\!\left[S(\theta_j),S(\theta_{j'})\right] \\
    &=C\!\left((\theta_j-\theta_{j'})\bmod 2\pi\right) \\
    &=K_{(j-j')\bmod L}.
\end{split}
\label{eq:duct-circulant-covariance}
\end{equation}
Thus \(K\) is a circulant matrix.  This assumption does not mean that each
realized pressure waveform is rotationally symmetric.  It means that the
mean-removed probability distribution, or at least its second-order
covariance, is invariant under a common angular shift.

The same conclusion has a circumferential-mode interpretation.  Write the
mean-removed pressure field as
\begin{equation}
    S(\theta)=\sum_{n\in\mathbb{Z}}a_n e^{in\theta},
    \qquad
    \mathbb{E}[a_n]=0,
\end{equation}
with \(a_{-n}=a_n^\ast\) for a real field.  If distinct circumferential-mode
coefficients are uncorrelated and
\begin{equation}
    \mathbb{E}[a_n a_{n'}^\ast]=P_n\delta_{nn'},
    \qquad P_n\ge0,
\end{equation}
then
\begin{equation}
    K_{jj'}
    =\sum_{n\in\mathbb{Z}}P_n
      \exp\!\left\{in(\theta_j-\theta_{j'})\right\}.
\label{eq:duct-mode-covariance}
\end{equation}
For the \(L\) equally spaced samples, continuous circumferential orders that
are congruent modulo \(L\) are aliased into the same discrete mode.  With the
unitary Fourier matrix defined in Sect.~2,
\begin{equation}
    F K F^\ast
    =\diag(\lambda_0,\ldots,\lambda_{L-1}),
    \qquad
    \lambda_q=L\sum_{r\in\mathbb{Z}}P_{q+rL}\ge0.
\label{eq:duct-fourier-diagonalization}
\end{equation}
Fourier diagonalization is therefore not merely a computational assumption in
this idealized measurement system; it is the discrete circumferential acoustic
mode decomposition of a rotationally stationary covariance.

The binary mask has the direct channel interpretation
\(M_j=1\) when microphone \(j\) provides a usable record and \(M_j=0\) when
that channel is unavailable because of microphone failure, saturation, a
corrupted recording, or a communication dropout.  The row-selection matrix
\(H\) retains only the usable channels, and
\(D_M=H^T H=\operatorname{diag}(M_0,\ldots,M_{L-1})\) records their
circumferential acquisition geometry.  Unlike \(K\), an arbitrary \(D_M\) is
not rotationally invariant.  Its Fourier representation has entries
\begin{equation}
    (Q_M)_{qq'}
    =(F D_MF^\ast)_{qq'}
    =\frac{1}{L}\sum_{j=0}^{L-1}M_j
      e^{-i(\kappa_q-\kappa_{q'})j},
\label{eq:duct-mask-mode-mixing}
\end{equation}
which are generally nonzero for \(q\ne q'\).  Hence the off-diagonal elements
of \(Q_M\) mix circumferential acoustic modes in the posterior precision
\(\Lambda^{-1}+b^{-2}Q_M\).  This is the direct physical interpretation of the
mask-induced Fourier-mode coupling studied in this paper.

Table~\ref{tab:duct-notation} summarizes the correspondence between the
mathematical notation and the microphone-array problem.

\begin{table}[h]

\centering
\caption{
Correspondence between the finite periodic model and a circumferential
microphone array at one axial station.
}
\label{tab:duct-notation}
\begin{tabular}{p{0.27\linewidth} p{0.63\linewidth}}
\hline
Mathematical quantity & Circumferential-array interpretation \\
\hline
\(L\), \(\mathbb{T}_L\) & Number of nominal channels and their cyclic angular ordering. \\
\(j\), \(\theta_j=2\pi j/L\) & Microphone index and its circumferential angle. \\
\(S_j=S(\theta_j)\) & True mean-removed acoustic-pressure fluctuation at channel \(j\). \\
\(t_j\), \(j\in\Omega\) & Noisy measurement returned by an available channel. \\
\(b^2 I_m\) & Homogeneous independent measurement-noise covariance for the \(m\) available channels. \\
\(K\) & Circumferential pressure covariance assumed for the nominal rotationally stationary field. \\
\(F\), \(\widetilde S_n\), \(\lambda_n\) & Circumferential discrete-mode transform, modal pressure coefficient, and prior modal variance. \\
\(M_j\), \(\Omega\), \(m\) & Channel-availability indicator, set of available channels, and their number. \\
\(H\), \(D_M=H^T H\) & Usable-channel selector and full-ring diagonal availability matrix. \\
\(Q_M=F D_MF^\ast\) & Missing-channel-induced mixing matrix between circumferential modes. \\
\(\Sigma_M\) & Posterior error covariance for the reconstructed full-ring pressure vector. \\
\(G_{\max}\), \(2\pi G_{\max}/L\) & Largest cyclic gap in channel-index units and the corresponding largest angular gap between available microphones. \\
\(p(M)=1-m/L\) & Realized fraction of unavailable channels. \\
\(E(M)=L^{-1}\operatorname{Tr}\Sigma_M\) & Posterior variance averaged over the entire circumference. \\
\hline
\end{tabular}

\end{table}

The posterior mean reconstructs the full pressure vector, including the
unavailable channel positions, whereas \(\Sigma_M\) and \(E(M)\) quantify the
uncertainty that remains for the realized microphone configuration.

The present study does not analyze engine or compressor data.  It is an
idealized finite-dimensional reference model for this measurement configuration
and does not claim reconstruction performance or model fit for an operating
machine.  Strong support-structure effects, circumferentially nonuniform swirl,
nonuniform wall conditions, or other azimuthal inhomogeneities can make \(K\)
non-circulant, while sensor-specific noise can invalidate the common-noise model
\(b^2I_m\).  Such systems require a nonstationary covariance or a
channel-dependent likelihood and lie outside the present scope.  The narrower
purpose here is to determine exactly how a realized pattern of unavailable
circumferential channels enters Gaussian reconstruction and its posterior
uncertainty in the ideal rotationally stationary reference case.

\subsection{Related work and distinction from neighboring methods}
\label{sec:related-work}

Circumferential microphone measurements and spatial Fourier analysis of duct
acoustic modes provide the experimental context for the sensor-ring
interpretation.  Joppa analyzed equally spaced, wall-mounted circumferential
microphone data by a spatial Fourier transform to measure spinning modes in a
turbofan inlet \cite{Joppa1987}.  Wang et al. used an equidistant circular
microphone array for in-duct circumferential-mode measurements of an axial
fan/compressor \cite{WangEtAl2014}.  For nonsynchronous measurements with a
wall-mounted circular array, Wang et al. reconstructed a full cross-spectral
matrix from incomplete circumferential measurements and then identified
broadband acoustic modes \cite{WangEtAl2023}.  These works primarily address
acoustic-mode measurement or identification, and the last also addresses
cross-spectral matrix completion.  The present paper instead conditions a
Gaussian field on an arbitrary fixed binary pattern of unavailable channels,
evaluates the resulting posterior covariance and its normalized trace exactly
in finite dimensions, and isolates \(Q_M=F D_MF^\ast\) as the mechanism by
which two masks with the same unavailable fraction can yield different
uncertainties.  It neither proposes a duct-mode identification procedure nor a
cross-spectral matrix-completion algorithm.

In the terminology of spatial statistics, the same Gaussian conditioning formula
underlies kriging for partially observed spatial fields
\cite{Cressie1993,Stein1999}.  In the terminology of Gaussian-process
regression, the same finite Gaussian conditioning formulas are commonly used
\cite{RasmussenWilliams,Bishop,MacKay}.  The present paper uses these formulas as
a finite periodic-lattice benchmark with an arbitrary realized mask, and it
exposes the resulting Fourier-mode coupling.  The distinction is therefore not a new
conditioning identity, but the exact finite representation of the mask term and
its posterior-trace consequence in the periodic Fourier basis.

In signal-processing language, the complete-observation periodic case is the
finite Fourier-domain analogue of a scalar Wiener filter
\cite{Wiener1949}.  When every site is observed and the prior covariance is
translation invariant, each Fourier mode is shrunk independently by a scalar
factor.  With an arbitrary mask, however, the observation operator breaks
translation symmetry.  The likelihood contribution in Fourier space is then
\(b^{-2}Q_M\), with
\begin{equation}
    Q_M=F D_MF^\ast,
\end{equation}
and the posterior precision becomes \(\Lambda^{-1}+b^{-2}Q_M\).  Since
\(Q_M\) is generally non-diagonal, the posterior is a coupled finite-mode
problem rather than a collection of independent scalar Wiener shrinkages.

This is also the finite-lattice version of the familiar Fourier effect of
masking a signal.  Multiplication by a real-space mask corresponds in Fourier
space to convolution by the Fourier components of that mask, often described as
spectral mixing or leakage in finite-signal analysis
\cite{OppenheimSchafer2010}.  In the present Bayesian formulation this effect
appears directly inside the posterior precision through \(Q_M=F D_MF^\ast\), whose
off-diagonal entries are the finite Fourier components of the realized mask.

The formulation is also distinct from compressed sensing or sparse-recovery
theory \cite{CandesWakin2008}.  The latent signal is modeled by a Gaussian prior,
and the evaluated quantities are the posterior covariance and the normalized
posterior trace.  The aim is to quantify Bayes uncertainty for each finite
realized mask.

\paragraph{Scope.}
The scope of the paper is deliberately finite and mask-conditioned.  We do not
derive an asymptotic sampling theorem or an off-lattice interpolation theorem.
Instead, we focus on the exactly computable posterior covariance for each
realized finite observation mask and on the resulting dependence of posterior
uncertainty on mask geometry.  Random masks and random inter-observation
intervals are used as generation and reporting layers for realized finite masks.

\paragraph{Notation bridge to the complete-observation formulation.}
In the previous complete-observation formulation~\cite{Tsuzurugi2025},
the notation \(N,A,B,C=A+B\) was used.  Table~\ref{tab:notation-bridge}
summarizes how these symbols are carried into the present sparse-observation
formulation.  The previous formulation is preserved and recovered exactly as the boundary case
\(H=I_L\), \(D_M=I_L\), and \(m=L\).
Below, after this bridge, that complete-observation work is called the previous
formulation.

\begin{table}[h]

\centering
\caption{
Notation bridge between the previous complete-observation formulation and the
present sparse-observation formulation.  The complete-observation covariance
\(K+b^2I_L\) is recovered in the boundary case \(H=I_L\),
\(D_M=H^TH=I_L\), and \(m=L\).
}
\label{tab:notation-bridge}
\begin{tabular}{p{0.34\linewidth} p{0.56\linewidth}}
\hline
Previous complete-observation formulation & Current sparse-observation formulation \\
\hline
Lattice size \(N\) & Full latent lattice size \(L\); the number of observed sites is \(m=|\Omega|\le L\). \\
Signal covariance \(A\) & Full-lattice prior covariance \(K\). \\
Noise covariance \(B=b^2I_N\) & Observation-space noise covariance \(b^2I_m\). \\
Covariance sum \(C=A+B\) & Observation covariance \(HKH^T+b^2I_m\). \\
Complete observation & Boundary case \(H=I_L\), \(D_M=H^TH=I_L\), and \(m=L\), giving \(HKH^T+b^2I_m=K+b^2I_L\). \\
\hline
\end{tabular}

\end{table}

The contribution is threefold.  First, the paper fixes a finite-mask posterior
notation on the same finite periodic lattice as the complete-observation theory,
without claiming a new Gaussian-conditioning identity.  Second, it identifies the
mask term \(Q_M=F D_MF^\ast\) as the object that is generally non-diagonal in the
Fourier basis; its off-diagonal entries are finite Fourier components of the
realized mask and convert the scalar Fourier-mode posterior into a coupled
finite-mode posterior.  Third, it uses the posterior trace to quantify the
dependence of reconstruction uncertainty on realized mask geometry, thereby
making explicit why the missing rate should be supplemented by mask-geometry
information.

\section{One-dimensional Periodic Lattice and Admissible Covariance}

\begin{definition}[One-dimensional finite torus]
Let
\begin{equation}
    \TL=\{0,1,\ldots,L-1\}
\end{equation}
be the one-dimensional discrete torus.  The Fourier index is denoted by
\(n\in\TL\), and
\begin{equation}
    \kappa_n=\frac{2\pi n}{L}.
\end{equation}
\end{definition}

For a vector \(S=(S_0,\ldots,S_{L-1})^T\), we use the unitary discrete Fourier
transform
\begin{equation}
    \widetilde{S}_n
    =
    \frac{1}{\sqrt{L}}
    \sum_{j=0}^{L-1} S_j e^{-i\kappa_n j},
    \qquad
    S_j
    =
    \frac{1}{\sqrt{L}}
    \sum_{n=0}^{L-1} \widetilde{S}_n e^{i\kappa_n j}.
\end{equation}

\begin{definition}[Admissible periodic covariance kernel]
A translation-invariant periodic kernel \(K_r\), \(r\in\TL\), is admissible
if its finite Fourier eigenvalues
\begin{equation}
    \lambda_n
    =
    \sum_{r=0}^{L-1} K_r e^{-i\kappa_n r}
\end{equation}
satisfy
\begin{equation}
    \lambda_n\ge 0
    \qquad
    (n\in\TL).
\end{equation}
For real fields we also assume \(K_{-r}=K_r\).
\end{definition}

The covariance matrix is
\begin{equation}
    K_{ij}=K_{i-j},
\end{equation}
where all indices are understood on the finite torus.  It is diagonalized as
\begin{equation}
    K=F^{\ast}\Lambda F,
    \qquad
    \Lambda=\diag(\lambda_0,\ldots,\lambda_{L-1}),
\end{equation}
where \(F\) is the unitary Fourier matrix.

\subsection{Minimum-image Gaussian example}

For comparison with the previous periodic formulations, one may use the
minimum-image Gaussian
\begin{equation}
    K_r^{\mathrm{min}}
    =
    a^2
    \exp\left[
        -\frac{\widetilde{\delta}(r)^2}{2\ell^2}
    \right],
    \qquad
    \widetilde{\delta}(r)=\min\{r,L-r\}.
\label{eq:minimum-image-gaussian}
\end{equation}
Its finite eigenvalues are
\begin{equation}
    \lambda_n^{\mathrm{min}}
    =
    a^2
    \sum_{r=0}^{L-1}
    \exp\left[
        -\frac{\widetilde{\delta}(r)^2}{2\ell^2}
    \right]
    e^{-i\kappa_n r}.
\end{equation}
This kernel is used in parameter ranges satisfying
\begin{equation}
    \min_n \lambda_n^{\mathrm{min}}\ge 0.
\label{eq:minimum-image-admissibility}
\end{equation}

\section{Sparse-observation Model}

Let \(S_j\) be the latent zero-mean Gaussian field on the full lattice:
\begin{equation}
    S\sim \mathcal{N}(0,K).
\end{equation}
The sparse-observation formulation separates three layers.  The latent field
\(S\) is always an \(L\)-dimensional vector on the full lattice.  The observed
data \(t\) are an \(m\)-dimensional vector in observation space.  The diagonal
matrix \(D_M\) is an \(L\times L\) full-lattice bookkeeping device for the mask,
whereas \(H\) is the actual \(m\times L\) observation operator that maps the
latent lattice field to observation space.

Sparse observations are represented by a mask vector
\begin{equation}
    M_j\in\{0,1\},
    \qquad
    j\in\TL.
\end{equation}
Here and below, \(M=(M_0,\ldots,M_{L-1})\) is the binary mask vector, whereas
\(D_M\) is the diagonal matrix generated from that vector.  We do not use
\(M\) itself as a matrix.  Let
\begin{equation}
    \Omega=\{j\in\TL:M_j=1\},
    \qquad
    m=|\Omega|.
\end{equation}

\begin{definition}[Arbitrary finite observation mask]
A sparse-observation mask is a binary vector
\begin{equation}
    M=(M_0,\ldots,M_{L-1}),
    \qquad
    M_j\in\{0,1\}.
\end{equation}
The observed set is
\begin{equation}
    \Omega=\{j\in\TL:M_j=1\}.
\end{equation}
In the finite-dimensional formulation, no deterministic condition is imposed
on the positions of the observed sites, on the lengths of missing intervals, or
on the maximum gap.
\end{definition}

Define the observation operator \(H\in\mathbb{R}^{m\times L}\) by selecting
the rows of the identity matrix corresponding to \(\Omega\).  The observation
model is
\begin{equation}
    t=HS+\xi,
    \qquad
    \xi\sim\mathcal{N}(0,b^2 I_m).
\end{equation}
Equivalently, in full-lattice notation,
\begin{equation}
    t_j=S_j+\xi_j
    \qquad
    (j\in\Omega),
\end{equation}
where \(t_j\) denotes the component of the observation-space vector \(t\)
associated with the observed site \(j\in\Omega\).  Thus \(t\) is an
\(m\)-dimensional observation-space vector indexed by the observed set
\(\Omega\), or by any fixed ordering of that set.

The diagonal mask matrix on the full lattice is
\begin{equation}
    D_M=H^T H
    =
    \diag(M_0,\ldots,M_{L-1}).
\end{equation}

Figure~\ref{fig:mask-geometry} illustrates the finite mask geometry as two
64-channel circumferential microphone-array configurations.  In the derivation,
the realized mask enters the posterior formula through the observation operator
\(H\) and the full-lattice mask matrix \(D_M=H^T H\).

\section{Statistical-mechanical Posterior Formulation}

The prior density is
\begin{equation}
    P(S)
    =
    \frac{1}{Z_S}
    \exp\left[
        -\frac{1}{2}S^T K^{-1}S
    \right],
\end{equation}
assuming that \(K\) is strictly positive definite.  Appendix~\ref{app:zero-eigenvalue}
records the zero-eigenvalue convention used when this assumption is relaxed.

The sparse-observation likelihood is
\begin{equation}
    P(t|S)
    =
    \frac{1}{(2\pi b^2)^{m/2}}
    \exp\left[
        -\frac{1}{2b^2}\|t-HS\|^2
    \right].
\end{equation}
Therefore, up to a term independent of \(S\), the conditional Hamiltonian is
\begin{equation}
    H(S|t)
    =
    \frac{1}{2}S^T K^{-1}S
    +
    \frac{1}{2b^2}\|t-HS\|^2.
\end{equation}

The core object in this section is the exact finite-dimensional Gaussian
posterior conditioned on one realized observation mask; this is the finite-mask
continuation of the previous complete-observation posterior.  For a fixed
realized finite mask, all observation geometry enters the quadratic Hamiltonian
through the row-selection operator \(H\), or equivalently through the full-lattice
mask matrix \(D_M=H^TH\).

In this finite Gaussian model, the posterior precision matrix is the Hessian of
\(H(S|t)\) with respect to the latent field \(S\).  Thus the
statistical-mechanical calculation reduces to completing the square in a
quadratic Hamiltonian.  The important finite-mask change is not the conditioning
identity itself, but the replacement of the complete-observation likelihood
precision \(b^{-2}I_L\) by the mask precision \(b^{-2}D_M\).  In Fourier space
this term becomes \(b^{-2}F D_MF^\ast\), and this is the finite-matrix origin of
the Fourier-mode coupling appearing below.

The theorem is stated not as a new Gaussian-conditioning theorem, but as the
finite-mask notation used throughout the paper.  Its role is to display, in the
same statement, the latent-space posterior covariance, the observation-space
Woodbury form, and the Fourier-space precision containing the mask matrix
\(Q_M\).  The underlying conditioning step is the standard finite
linear-Gaussian identity \cite{Bishop,MacKay,Harville1997}; the notation below
isolates the mask-dependent Fourier mechanism and the trace quantity
\(L^{-1}\operatorname{Tr}\Sigma_M\).  The theorem is stated for
positive-definite \(K\), so that \(K^{-1}\) is an ordinary inverse.  The
positive-semidefinite convention is a technical support convention and is
collected in Appendix~\ref{app:zero-eigenvalue}.

\begin{theorem}[Finite-mask posterior and coupled Fourier representation]
Let \(S\sim\mathcal{N}(0,K)\) be a Gaussian field on \(\TL\), and let
\[
    t=HS+\xi,
    \qquad
    \xi\sim\mathcal{N}(0,b^2I_m),
\]
where \(H\) selects the observed sites specified by a fixed realized finite
mask \(M\).  If \(K\) is positive definite and \(b^2>0\), then the posterior
distribution \(P(S|t)\) is Gaussian with covariance
\begin{equation}
    \Sigma_M
    =
    \left(K^{-1}+b^{-2}D_M\right)^{-1}
\end{equation}
and mean
\begin{equation}
    \mu_M
    =
    \Sigma_M b^{-2}H^Tt.
\end{equation}
Equivalently, by the Woodbury identity, the same posterior has the
observation-space implementation form with an \(m\times m\) inverse,
\begin{equation}
    \mu_M
    =
    KH^T(HKH^T+b^2I_m)^{-1}t
\end{equation}
and
\begin{equation}
    \Sigma_M
    =
    K-KH^T(HKH^T+b^2I_m)^{-1}HK.
\end{equation}
Writing \(\widetilde{\Sigma}_M=F\Sigma_MF^\ast\), the Fourier-space posterior
precision is
\begin{equation}
    \widetilde{\Sigma}_M^{-1}
    =
    \Lambda^{-1}+b^{-2}Q_M,
    \qquad
    Q_M=F D_MF^\ast .
\end{equation}
Thus the finite-mask posterior is scalar-mode diagonal exactly in cases where
the mask term \(Q_M\) is diagonal.
\end{theorem}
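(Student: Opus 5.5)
The plan is to complete the square in the conditional Hamiltonian \(H(S|t)\). Expanding \(\|t-HS\|^2=t^Tt-2S^TH^Tt+S^TH^THS\) and using \(H^TH=D_M\), the Hamiltonian becomes
\[
\tfrac12 S^T\bigl(K^{-1}+b^{-2}D_M\bigr)S-b^{-2}S^TH^Tt+\text{const}.
\]
Since \(K\) is positive definite and \(D_M\) is positive semidefinite, the matrix \(P=K^{-1}+b^{-2}D_M\) is positive definite, hence invertible. Completing the square then gives \(\tfrac12(S-\mu_M)^TP(S-\mu_M)+\text{const}\) with \(\mu_M=P^{-1}b^{-2}H^Tt\). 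Because \(P(S|t)\propto\exp[-H(S|t)]\) by Bayes' rule, the posterior is Gaussian with covariance \(\Sigma_M=P^{-1}\) and mean \(\mu_M=\Sigma_Mb^{-2}H^Tt\), as stated.

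Next I would derive the observation-space forms.

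\textbf{Covariance.} Apply the Woodbury identity
\[
(K^{-1}+H^T(b^2I_m)^{-1}H)^{-1}=K-KH^T(HKH^T+b^2I_m)^{-1}HK.
\]
It is valid because \(HKH^T+b^2I_m\) is positive definite, since \(b^2>0\).

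\textbf{Mean.} Use the push-through identity
\[
(K^{-1}+b^{-2}H^TH)^{-1}b^{-2}H^T=KH^T(HKH^T+b^2I_m)^{-1}.
\]
To check it, multiply on the left by \(K^{-1}+b^{-2}H^TH\) and on the right by \(HKH^T+b^2I_m\). Both sides then reduce to \(H^T+b^{-2}H^THKH^T\). This step is routine, but it is the one where careful bookkeeping of the \(b^{-2}\) factors matters most.

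\textbf{Fourier representation.} Since \(F\) is unitary, \(\widetilde\Sigma_M^{-1}=(F\Sigma_MF^\ast)^{-1}=F\Sigma_M^{-1}F^\ast\). Substituting \(K=F^\ast\Lambda F\) gives \(K^{-1}=F^\ast\Lambda^{-1}F\), and therefore \(F K^{-1}F^\ast=\Lambda^{-1}\). Together with \(FD_MF^\ast=Q_M\), this yields \(\widetilde\Sigma_M^{-1}=\Lambda^{-1}+b^{-2}Q_M\).

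\textbf{Final claim.} Because \(\Lambda^{-1}\) is diagonal, \(\widetilde\Sigma_M^{-1}\) is diagonal exactly when \(Q_M\) is diagonal. A positive-definite matrix is diagonal if and only if its inverse is, so \(\widetilde\Sigma_M\) is diagonal under the same condition. In that case each Fourier mode has the independent scalar posterior variance \((\lambda_n^{-1}+b^{-2}(Q_M)_{nn})^{-1}\). This covers the complete-observation boundary case \(Q_M=I_L\), and more generally every mask with \(Q_M\) diagonal.

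There is no substantial obstacle. The only points needing care are the positive definiteness that justifies each inverse and the correct placement of the \(b^{-2}\) factors in the push-through step.
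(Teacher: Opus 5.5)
Your proposal is correct and follows the paper's proof essentially step for step: complete the square in \(H(S|t)\) to read off the precision \(K^{-1}+b^{-2}D_M\) and the mean, invoke the Woodbury identity for the observation-space covariance, and conjugate the precision by \(F\) to obtain \(\Lambda^{-1}+b^{-2}Q_M\). Your push-through verification of the observation-space mean, and your remark that an invertible matrix is diagonal if and only if its inverse is (which justifies the final ``exactly when'' clause), supply details the paper only states or leaves implicit.
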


\begin{proof}
Because \(K\) is positive definite in the theorem, the conditional Hamiltonian
is an ordinary finite quadratic form.  Expanding it, up to an additive term
independent of \(S\), gives
\begin{align}
    H(S|t)
    &=
    \frac{1}{2}S^TK^{-1}S
    +
    \frac{1}{2b^2}(t-HS)^T(t-HS)
    \nonumber\\
    &=
    \frac{1}{2}S^T\left(K^{-1}+b^{-2}H^TH\right)S
    -
    b^{-2}S^TH^Tt
    +
    \mathrm{const.}
\end{align}
Taking the Hessian of the conditional Hamiltonian with respect to \(S\) gives
\begin{equation}
    \nabla_S^2 H(S|t)
    =
    K^{-1}+b^{-2}H^TH
    =
    K^{-1}+b^{-2}D_M .
\end{equation}
Thus the Hessian is exactly the posterior precision on the full latent lattice.
All mask geometry in this precision enters through \(D_M=H^TH\).  We write it as
\begin{equation}
    P_M
    =
    K^{-1}+b^{-2}D_M .
\end{equation}
Completing the square in this quadratic Hamiltonian then identifies the
posterior mean and covariance.  Specifically,
\begin{equation}
    H(S|t)
    =
    \frac{1}{2}(S-\mu_M)^TP_M(S-\mu_M)
    +
    \mathrm{const.},
\end{equation}
where
\begin{equation}
    \mu_M
    =
    P_M^{-1}b^{-2}H^Tt .
\end{equation}
Therefore the square-completion calculation gives the exact Gaussian posterior
covariance
\begin{equation}
    \Sigma_M=P_M^{-1}
    =
    \left(K^{-1}+b^{-2}D_M\right)^{-1},
\end{equation}
and the posterior mean \(\mu_M=\Sigma_M b^{-2}H^Tt\).

The observation-space forms follow by applying the Woodbury identity, also
called the matrix inversion lemma, to \(K^{-1}+b^{-2}H^TH\)
\cite{Harville1997,GolubVanLoan2013}.  Appendix~\ref{app:woodbury-identity}
records this algebraic step in the form used for computation.  This produces an
algebraically identical expression whose matrix inverse is the \(m\times m\)
observation-space matrix \(HKH^T+b^2I_m\):
\begin{equation}
    \Sigma_M
    =
    K-KH^T(HKH^T+b^2I_m)^{-1}HK,
\end{equation}
and the corresponding posterior mean is
\begin{equation}
    \mu_M
    =
    KH^T(HKH^T+b^2I_m)^{-1}t.
\end{equation}
Finally, multiplying the latent-space precision by the Fourier transform gives
\(F(K^{-1}+b^{-2}D_M)F^\ast=\Lambda^{-1}+b^{-2}F D_MF^\ast\), which is the stated
coupled Fourier representation.
\end{proof}

\begin{remark}
The theorem is conditional on a fixed realized mask.  Independent scalar Fourier
modes occur in the diagonal-mask boundary cases.  Random masks introduced later
are outer generation models for realized masks; after a mask is realized, the
posterior precision is determined by its corresponding \(D_M\).
\end{remark}

\section{Fourier Representation and Mode Coupling}

In the complete-observation formulation, the prior precision and the
likelihood precision are diagonal in the same Fourier basis.  This simultaneous
diagonalization is what reduces the posterior calculation to independent scalar
Fourier modes.  In the sparse-observation formulation, the prior keeps this
property, but the likelihood precision changes from \(b^{-2}I_L\) to
\(b^{-2}D_M\).  This section makes the realized mask geometry visible in Fourier
space by writing the mask contribution as \(Q_M=F D_MF^\ast\).  Proposition~1 is
the explicit finite-mask statement that this matrix generally converts the
complete-observation scalar-mode posterior into a coupled finite-mode posterior.

Define
\begin{equation}
    \widetilde{\Sigma}_M=F\Sigma_MF^\ast .
\end{equation}
Using \(K=F^\ast\Lambda F\), the posterior precision matrix in Fourier space is
\begin{equation}
    \widetilde{\Sigma}_M^{-1}
    =
    \Lambda^{-1}+b^{-2}F D_MF^\ast .
\end{equation}
Define
\begin{equation}
    Q_M=F D_MF^\ast .
\end{equation}
Its matrix elements are
\begin{equation}
    (Q_M)_{nn'}
    =
    \frac{1}{L}
    \sum_{j=0}^{L-1}M_j e^{-i(\kappa_n-\kappa_{n'})j}.
\end{equation}
Equivalently, define the finite Fourier components of the realized mask by
\begin{equation}
    \widehat{M}_q
    =
    \sum_{j=0}^{L-1}M_j\exp\left(-\frac{2\pi i qj}{L}\right),
    \qquad q\in \mathbb{T}_L .
\end{equation}
Since \(\kappa_n-\kappa_{n'}=2\pi(n-n')/L\), the mask term satisfies
\begin{equation}
    (Q_M)_{nn'}=\frac{1}{L}\widehat{M}_{n-n'},
\end{equation}
where the index difference is understood modulo \(L\).  Thus the off-diagonal
elements of \(Q_M\) are not arbitrary matrix elements; they are precisely the
finite Fourier coefficients of the realized observation mask.  The observation
geometry therefore enters the posterior precision through the spectrum of the
mask.  A mask with a nonzero Fourier component at the difference frequency
\(n-n'\) couples the prior modes \(n\) and \(n'\).

The diagonal entries are
\begin{equation}
    (Q_M)_{nn}=\frac{m}{L}.
\end{equation}
This identity separates the observed-site count from the realized geometry of
the mask.  The fraction \(m/L\), or equivalently the missing rate \(p=1-m/L\),
fixes only the diagonal entries of \(Q_M\).  It does not determine the
off-diagonal entries
\begin{equation}
    (Q_M)_{nn'}=\frac{1}{L}\widehat{M}_{n-n'} \qquad (n\ne n'),
\end{equation}
which are the finite Fourier components of the realized mask at nonzero
difference frequencies.  These components encode how the observed sites are
spaced, clustered, or otherwise arranged on the finite torus.

The posterior trace is consequently not a function of the eigenvalue list of
\(Q_M\) alone.  By trace invariance,
\begin{equation}
    E(M)=\frac{1}{L}\Tr\Sigma_M
    =\frac{1}{L}\Tr\left[
    \left(\Lambda^{-1}+b^{-2}Q_M\right)^{-1}\right].
\end{equation}
Thus the finite posterior uncertainty depends on how the mask matrix \(Q_M\),
written in the Fourier basis, is positioned relative to the prior spectrum
\(\Lambda\).  The diagonal part records the missing rate, but the off-diagonal
Fourier components determine the mode couplings that enter the inverse trace.
In the complete-observation case, \(M_j=1\) for all \(j\), so
\(\widehat{M}_q=L\delta_{q0}\) and the off-diagonal mode coupling disappears.
For an arbitrary nonconstant mask, nonzero components with \(q\ne0\) generally
remain and couple Fourier modes.

\begin{proposition}[Fourier mode coupling by a nontrivial mask]
For a binary full-lattice mask, \(Q_M=F D_MF^\ast\) is diagonal in the Fourier
basis if and only if \(D_M=I_L\) or \(0\).  Hence any true sparse mask with
\(0<m<L\) has at least one nonzero off-diagonal Fourier component
\begin{equation}
    \widehat{M}_{n-n'}=
    \sum_{j=0}^{L-1}M_j e^{-i(\kappa_n-\kappa_{n'})j}
    \qquad(n\ne n'),
\end{equation}
and it couples Fourier modes in the posterior precision
\(\widetilde{\Sigma}_M^{-1}=\Lambda^{-1}+b^{-2}Q_M\).
\end{proposition}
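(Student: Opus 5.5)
The plan is to use the circulant structure of \(Q_M\) established just before the proposition, namely \((Q_M)_{nn'}=L^{-1}\widehat{M}_{n-n'}\) with indices taken modulo \(L\). Since every off-diagonal entry depends only on the difference \(q=n-n'\neq 0\), and every nonzero residue \(q\in\TL\) is realized as such a difference (take \(n=q\), \(n'=0\)), \(Q_M\) is diagonal if and only if \(\widehat{M}_q=0\) for all \(q\neq 0\). So the proposition reduces to characterizing binary vectors whose finite Fourier transform vanishes away from the zero frequency.

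For the ``if'' direction, check the two boundary cases directly. If \(D_M=I_L\), then \(Q_M=FF^\ast=I_L\), which agrees with \(\widehat{M}_q=L\delta_{q0}\) noted above. If \(D_M=0\), then \(Q_M=0\). Both are diagonal.

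For the ``only if'' direction, suppose \(\widehat{M}_q=0\) for every \(q\neq0\). Apply the inverse DFT:
\[
M_j=\frac1L\sum_{q}\widehat{M}_q e^{2\pi i qj/L}=\frac{\widehat{M}_0}{L}=\frac{m}{L}
\]
for every \(j\). Hence \(M\) is constant. Because each \(M_j\in\{0,1\}\), the constant \(m/L\) must be \(0\) or \(1\), so \(D_M=0\) or \(D_M=I_L\). This is the only substantive step, and the integrality constraint on a binary mask is exactly what excludes a diagonal \(Q_M\) when \(0<m<L\). The obstacle is mild; it consists only in invoking Fourier inversion correctly with the paper's unnormalized \(\widehat{M}_q\) convention.

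The consequence follows by contraposition. For \(0<m<L\), \(Q_M\) is not diagonal, so some \(q\neq0\) has \(\widehat{M}_q\neq0\). Choosing \(n,n'\) with \(n-n'\equiv q\) gives a nonzero off-diagonal entry of \(Q_M\). Because \(\Lambda^{-1}\) is diagonal, that same entry, scaled by \(b^{-2}>0\), is an off-diagonal entry of \(\widetilde{\Sigma}_M^{-1}=\Lambda^{-1}+b^{-2}Q_M\) from Theorem~1. The posterior precision therefore couples modes \(n\) and \(n'\).
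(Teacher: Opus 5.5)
Your proof is correct and is essentially the paper's argument. The paper observes that a diagonal \(Q_M\) would make \(D_M=F^\ast Q_MF\) both circulant and diagonal, hence scalar. You state the same fact on the Fourier side: a mask spectrum supported only at \(q=0\) inverts to the constant mask \(M_j=m/L\). Your explicit check of the coupling consequence, that the nonzero entry \(b^{-2}L^{-1}\widehat{M}_{n-n'}\) is an off-diagonal entry of \(\Lambda^{-1}+b^{-2}Q_M\), fills in a step the paper leaves implicit.
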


\begin{proof}
If \(D_M=I_L\), then \(Q_M=I_L\); if \(D_M=0\), then \(Q_M=0\).  These are the
complete and empty diagonal cases.  Conversely, suppose that \(Q_M\) is diagonal.
Then \(D_M=F^\ast Q_MF\) is a circulant matrix.  Since \(D_M\) is also diagonal,
it must be a scalar multiple of \(I_L\).  The binary condition on the diagonal
entries then gives either \(D_M=I_L\) or \(0\).  Thus any binary mask with
\(0<m<L\) necessarily leaves off-diagonal entries in \(Q_M\), equivalently
nonzero finite Fourier components at some nonzero difference frequencies.
\end{proof}

The proposition identifies the finite-matrix mechanism behind the deterministic
comparisons below.  The missing rate fixes \((Q_M)_{nn}=m/L\), whereas the
realized mask geometry fixes the off-diagonal Fourier components.  These
components are precisely the entries that convert the complete-observation
scalar-mode posterior into the coupled inverse-trace quantity \(E(M)\).  Hence
two masks with the same \(p\) can have different posterior covariances and
different posterior traces, as in Table~\ref{tab:deterministic-mask-geometry}.

\begin{remark}[Mask spectrum and relative orientation]
Although \(Q_M\) is unitarily similar to \(D_M\), its eigenvalue list contains only
the observed-site count: \(m\) eigenvalues are one and \(L-m\) eigenvalues are
zero.  Therefore all masks with the same \(m\) have the same eigenvalues of
\(Q_M\).  The posterior trace, however, is
\begin{equation}
    \frac{1}{L}\Tr\left[
    \left(\Lambda^{-1}+b^{-2}Q_M\right)^{-1}\right],
\end{equation}
and this trace depends on the relative orientation of \(Q_M\) and the diagonal
prior spectrum \(\Lambda\), not on the eigenvalues of \(Q_M\) alone.  Equivalently,
\(\Lambda^{-1}\) selects the Fourier basis as a preferred basis, and the entries
of \(Q_M\) in that basis are the finite mask Fourier components
\(L^{-1}\widehat{M}_{n-n'}\).  For a smooth prior, the larger eigenvalues of
\(\Lambda\) often lie in low-frequency modes.  The amount of low-frequency mask
spectral weight and its coupling to other modes can therefore change the inverse
trace even when the missing rate is fixed.  This is the Fourier-space reason why
regular and clustered masks with the same \(m\) can leave different posterior
uncertainties.
\end{remark}

Figure~\ref{fig:fourier-mode-coupling} summarizes the structural distinction
between the complete-observation case and the sparse-mask case.

\subsection{Complete-observation reduction}

When \(D_M=I_L\), the posterior covariance becomes diagonal in Fourier space:
\begin{equation}
    \widetilde{\Sigma}_{\mathrm{full},n}
    =
    \left(\lambda_n^{-1}+b^{-2}\right)^{-1}
    =
    \frac{\lambda_n b^2}{\lambda_n+b^2}.
\end{equation}
The posterior mean has the scalar coefficient
\begin{equation}
    \alpha_n
    =
    \frac{\lambda_n}{\lambda_n+b^2}.
\end{equation}
Thus the complete-observation formula of the previous
formulation~\cite{Tsuzurugi2025} is retained exactly as the boundary case
\(D_M=I_L\).

\begin{corollary}[Complete-observation matched error]
For the complete-observation mask \(D_M=I_L\), the matched reconstruction error is
\begin{equation}
    E_{\mathrm{full}}
    =
    \frac{1}{L}\sum_{n=0}^{L-1}
    \frac{\lambda_n b^2}{\lambda_n+b^2}.
\end{equation}
\end{corollary}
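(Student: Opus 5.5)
The corollary follows by specializing the Theorem to the complete-observation mask and evaluating the trace in the Fourier basis. With \(D_M=I_L\) (that is, \(H=I_L\) and \(m=L\)), the Theorem gives \(\Sigma_{\mathrm{full}}=(K^{-1}+b^{-2}I_L)^{-1}\). In Fourier space, \(Q_M=FI_LF^\ast=I_L\), so the Fourier-space precision is \(\widetilde{\Sigma}_{\mathrm{full}}^{-1}=\Lambda^{-1}+b^{-2}I_L\). This matrix is diagonal with entries \(\lambda_n^{-1}+b^{-2}\), and its inverse is diagonal with entries \(\lambda_n b^2/(\lambda_n+b^2)\), which is exactly the scalar-mode formula displayed just before the corollary.

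Next, identify the matched reconstruction error with the normalized posterior trace \(E(M)=L^{-1}\Tr\Sigma_M\), as defined in the preceding section. Since \(F\) is unitary, trace invariance gives \(\Tr\Sigma_{\mathrm{full}}=\Tr(F\Sigma_{\mathrm{full}}F^\ast)=\Tr\widetilde{\Sigma}_{\mathrm{full}}\). The trace of the diagonal matrix \(\widetilde{\Sigma}_{\mathrm{full}}\) is the sum of its diagonal entries, and dividing by \(L\) yields
\begin{equation}
E_{\mathrm{full}}=\frac{1}{L}\sum_{n=0}^{L-1}\frac{\lambda_n b^2}{\lambda_n+b^2}.
\end{equation}

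The only delicate point is that the Theorem assumes \(K\) is positive definite, so every \(\lambda_n>0\) and \(\lambda_n^{-1}\) is defined. If some \(\lambda_n=0\) is allowed under the zero-eigenvalue convention of the appendix, I would instead evaluate the Woodbury form \(\Sigma_{\mathrm{full}}=K-K(K+b^2I_L)^{-1}K\). This expression is diagonalized by \(F\) with entries \(\lambda_n-\lambda_n^2/(\lambda_n+b^2)=\lambda_n b^2/(\lambda_n+b^2)\), and it remains valid at \(\lambda_n=0\), where the entry is \(0\). The formula therefore extends continuously to that case.
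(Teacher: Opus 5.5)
Your proof is correct and follows essentially the paper's own route: setting \(D_M=I_L\) gives \(Q_M=I_L\), so \(\widetilde{\Sigma}_{\mathrm{full}}\) is diagonal with entries \(\lambda_n b^2/(\lambda_n+b^2)\), and unitary trace invariance turns \(L^{-1}\Tr\Sigma_{\mathrm{full}}\) into the stated mode sum. Your Woodbury-based treatment of zero eigenvalues goes slightly beyond what the paper writes out for the corollary, and it agrees with the support convention of Appendix~\ref{app:zero-eigenvalue}, where zero-variance modes keep zero posterior variance.
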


\section{Matched Mean Squared Error}

In the previous complete-observation formulation, the matched error
\(E_{1\min}\) was obtained as a Fourier-mode-wise sum because the posterior
covariance was diagonal in the Fourier basis.  Corollary~1 is the corresponding
complete-observation reduction in the present notation.  For a sparse mask,
\(\widetilde{\Sigma}_M\) is generally non-diagonal in Fourier space, so the
matched error is represented by the full posterior trace rather than by an
independent scalar-mode sum.  The natural finite-dimensional continuation of the
previous \(E_{1\min}\) is therefore the normalized posterior trace defined below.  In the complete-observation limit
\(D_M=I_L\), this trace reduces to the mode-wise expression in Corollary~1 and
hence to the previous complete-observation expression.

For a fixed realized mask \(M\), define
\begin{equation}
    E(M)=\frac{1}{L}\Tr\Sigma_M.
\end{equation}
This quantity is the Bayes risk of the matched posterior-mean estimator after
averaging over both the latent Gaussian field and the observation noise under the
same model.  Since the posterior covariance in a linear Gaussian model is independent
of the realized data values \(t\), the matched error depends on the mask and the
model parameters through \(\Sigma_M\).

Using the Woodbury form,
\begin{equation}
    E(M)
    =
    \frac{1}{L}\Tr K
    -
    \frac{1}{L}\Tr\left[
        KH^T(HKH^T+b^2I_m)^{-1}HK
    \right].
\end{equation}
The second term is the variance reduction due to the observed sites.  This form
also shows why an \(m\times m\) inverse is useful when the number of observations
is much smaller than \(L\).

\begin{proposition}[Monotonicity under adding observations]
Let \(M\) and \(M'\) be two masks with \(D_{M'}-D_M\) positive semidefinite.
Then
\begin{equation}
    \Sigma_{M'}\preceq \Sigma_M,
    \qquad
    E(M')\le E(M).
\end{equation}
\end{proposition}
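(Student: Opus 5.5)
The plan is to work with the posterior precision from Theorem~1, \(P_M=K^{-1}+b^{-2}D_M\), and to use the fact that matrix inversion reverses the Loewner order on positive definite matrices. Since \(K\) is positive definite and \(b^2>0\), both \(P_M\) and \(P_{M'}\) are positive definite, because \(D_M\) and \(D_{M'}\) are positive semidefinite. The hypothesis \(D_{M'}-D_M\succeq 0\) gives directly
\begin{equation}
    P_{M'}-P_M=b^{-2}\left(D_{M'}-D_M\right)\succeq 0,
\end{equation}
that is, \(P_M\preceq P_{M'}\). For binary masks this hypothesis simply says \(\Omega\subseteq\Omega'\), although the argument does not need that interpretation.

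The key step is the operator-antitonicity of the inverse: if \(0\prec A\preceq B\), then \(B^{-1}\preceq A^{-1}\). I would prove this self-containedly rather than cite it. Write
\begin{equation}
    C=A^{-1/2}BA^{-1/2}.
\end{equation}
From \(A\preceq B\) we get \(C\succeq I_L\), so every eigenvalue of the symmetric positive definite matrix \(C\) is at least one. Hence the eigenvalues of \(C^{-1}\) are at most one, so \(C^{-1}\preceq I_L\). Conjugating by \(A^{-1/2}\) preserves the Loewner order and gives
\begin{equation}
    A^{-1/2}C^{-1}A^{-1/2}=B^{-1}\preceq A^{-1}.
\end{equation}
Applying this with \(A=P_M\) and \(B=P_{M'}\) yields \(\Sigma_{M'}=P_{M'}^{-1}\preceq P_M^{-1}=\Sigma_M\).

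The trace inequality then follows because the trace is monotone under the Loewner order. The difference \(\Sigma_M-\Sigma_{M'}\) is positive semidefinite, so its trace, the sum of its nonnegative eigenvalues, is nonnegative. Dividing by \(L\) gives \(E(M')\le E(M)\).

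The only real obstacle is the antitonicity step, which the congruence argument above handles. I would also add a brief remark that, if the positive-semidefinite-\(K\) convention of Appendix~\ref{app:zero-eigenvalue} is invoked, one can instead work from the Woodbury form of Theorem~1, or apply the same argument on the support of \(K\). The stated proposition assumes positive definite \(K\), so this extension is not needed here.
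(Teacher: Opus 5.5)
Your proof is correct and follows the paper's argument: order the posterior precisions via \(P_{M'}-P_M=b^{-2}(D_{M'}-D_M)\succeq 0\), invert to reverse the Loewner order, then take the trace. The only difference is that you prove the antitonicity of the inverse with the congruence \(C=A^{-1/2}BA^{-1/2}\), whereas the paper simply states it as a known fact.
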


\begin{proof}
The corresponding posterior precisions satisfy
\begin{equation}
    K^{-1}+b^{-2}D_{M'}
    \succeq
    K^{-1}+b^{-2}D_M .
\end{equation}
For positive definite matrices, larger precision implies smaller covariance in
the Loewner order.  Taking the trace gives the result.  This is consistent with
the variance-reduction representation above: adding observed sites increases the
non-negative reduction from the prior trace and therefore reduces posterior
variance.
\end{proof}

\begin{corollary}[Bounds between complete and empty observations]
Let
\begin{equation}
    E_{\mathrm{prior}}=\frac{1}{L}\operatorname{Tr}K
\end{equation}
be the empty-observation error, and let \(E_{\mathrm{full}}\) be the
complete-observation error in Corollary~1.  Then, for any observation mask,
\begin{equation}
    E_{\mathrm{full}}
    \le
    E(M)
    \le
    E_{\mathrm{prior}} .
\end{equation}
\end{corollary}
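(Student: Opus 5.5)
The bounds follow by sandwiching an arbitrary mask between the empty and the complete masks and applying Proposition~2 (monotonicity) twice. Every binary mask satisfies
\begin{equation}
    0\preceq D_M\preceq I_L ,
\end{equation}
because \(D_M=\diag(M_0,\ldots,M_{L-1})\) has entries in \(\{0,1\}\). Both \(D_M-0\) and \(I_L-D_M\) are therefore diagonal with nonnegative entries, hence positive semidefinite.

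For the upper bound, I apply Proposition~2 with the pair (empty mask, \(M\)). This gives \(\Sigma_M\preceq\Sigma_{\emptyset}\) and \(E(M)\le E(\emptyset)\). I then identify \(E(\emptyset)\). With \(D_M=0\), the posterior covariance of Theorem~1 is \((K^{-1})^{-1}=K\), so \(E(\emptyset)=L^{-1}\Tr K=E_{\mathrm{prior}}\). If one prefers to avoid an empty \(H\) (with \(m=0\)), the same conclusion follows from the Woodbury form: the subtracted term \(L^{-1}\Tr[KH^T(HKH^T+b^2I_m)^{-1}HK]\) is the trace of a positive semidefinite matrix and is therefore nonnegative.

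For the lower bound, I apply Proposition~2 with the pair (\(M\), complete mask). This gives \(\Sigma_{\mathrm{full}}\preceq\Sigma_M\). Taking the normalized trace yields \(E_{\mathrm{full}}\le E(M)\), where \(E_{\mathrm{full}}\) is the quantity evaluated in Corollary~1 via the Fourier diagonalization with \(D_M=I_L\).

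The only point needing care is the degenerate empty mask, where \(m=0\) and \(H\) is an empty matrix. I would handle it directly through the latent-space formula \((K^{-1}+b^{-2}D_M)^{-1}\) with \(D_M=0\), which is well defined because \(K\) is positive definite. Everything else is a direct application of the Loewner-order monotonicity already established, together with the fact that the trace preserves the Loewner order.
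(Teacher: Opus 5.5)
Your proposal is correct and matches the paper's proof: the paper also obtains both bounds by applying Proposition~2 to the pairs (empty mask, $M$) and ($M$, complete mask), which is legitimate because $0\preceq D_M\preceq I_L$. Your explicit identification of the empty-mask error as $E_{\mathrm{prior}}$ through $D_M=0$, and your treatment of the $m=0$ case, fill in details that the paper leaves implicit.
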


\begin{proof}
The claim follows from Proposition~2 by comparing an arbitrary mask with the
complete-observation mask \(D_M=I_L\) and the empty-observation mask \(D_M=0\).
\end{proof}

The bounds show that the finite-mask error is continuously ordered between the
complete- and empty-observation limits.  A practical threshold can be
introduced by specifying an external tolerance, for example by declaring a
reconstruction acceptable when
\begin{equation}
    E(M)<\varepsilon.
\end{equation}
The tolerance \(\varepsilon\) is supplied by the application or by an additional
limiting problem.

\begin{remark}[Continuous degradation]
For fixed finite \(L\), fixed covariance \(K\), and finite noise variance
\(b^2>0\), sparse observation changes the posterior covariance through a
finite-dimensional matrix formula.  The intrinsic quantity is the continuous
trace indicator
\begin{equation}
    E(M)=\frac{1}{L}\operatorname{Tr}\Sigma_M.
\end{equation}
A binary decision can be added by specifying an external tolerance or another
criterion outside the posterior formula itself.
\end{remark}

\section{Random Masks and Random Observation Intervals}

The posterior formula is conditional on a realized mask.  Random masks are used
here as a generation and reporting layer: they specify a distribution over
finite masks, and the posterior trace is evaluated after each mask has been
realized.

As a standard example, consider independent Bernoulli observation,
\begin{equation}
    M_j\sim \operatorname{Bernoulli}(\rho),
    \qquad
    0\le \rho\le 1,
\end{equation}
independently over \(j\).  Here \(\rho\) is the Bernoulli observation rate.  For
each realized mask, the observed-site count is \(m=|\Omega|\) and the realized
missing rate is \(p(M)=1-m/L\); the quantity \(1-\rho\) is the nominal missing
rate of the generation model.  For each realized mask,
\begin{equation}
    E(M)
    =
    \frac{1}{L}\Tr\left[
        \left(K^{-1}+b^{-2}D_M\right)^{-1}
    \right]
\end{equation}
is the deterministic finite-mask posterior trace.  The random-mask model
induces
the outer average
\begin{equation}
    \overline{E}(\rho)=\E_M[E(M)],
\end{equation}
which averages these deterministic posterior traces over realized masks.  The
mask averaging is performed outside the matrix inverse, after each finite mask
has been evaluated.

\begin{proposition}[Mean-mask approximation and exact averaging]
Because the matrix inverse is nonlinear,
\begin{equation}
    \mathbb{E}_M
    \left[
    \left(K^{-1}+b^{-2}D_M\right)^{-1}
    \right]
    \ne
    \left(K^{-1}+b^{-2}\mathbb{E}_M[D_M]\right)^{-1}
\end{equation}
in general.  For Bernoulli observation,
\begin{equation}
    \mathbb{E}_M[D_M]=\rho I,
\end{equation}
so replacing \(D_M\) by \(\rho I\) gives a mean-mask approximation rather than
the exact mask-averaged posterior covariance.
\end{proposition}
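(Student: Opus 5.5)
The proposition has two parts. The first is the identity \(\mathbb{E}_M[D_M]=\rho I\). The second is the claim that the mask average of the posterior covariance generally differs from the posterior covariance evaluated at the mean mask. I will prove the first directly and the second through an explicit witness plus a convexity refinement.

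\textbf{Step 1 (the mean mask).} Since \(D_M=\diag(M_0,\ldots,M_{L-1})\) and expectation acts entrywise, \(\mathbb{E}_M[D_M]=\diag(\mathbb{E}M_0,\ldots,\mathbb{E}M_{L-1})\). Each \(M_j\sim\operatorname{Bernoulli}(\rho)\), so this equals \(\rho I\). Substituting into the right-hand side gives \((K^{-1}+b^{-2}\rho I)^{-1}\). This is diagonal in the Fourier basis, with entries \(\lambda_n b^2/(b^2+\rho\lambda_n)\), in the same form as the complete-observation reduction with \(b^2\) replaced by \(b^2/\rho\).

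\textbf{Step 2 (inequality in general).} The phrase ``in general'' only requires one admissible instance where equality fails. I will use \(L=1\) and \(K=\lambda>0\), with \(0<\rho<1\).
\begin{itemize}
\item The exact average is \(\rho\,\lambda b^2/(\lambda+b^2)+(1-\rho)\lambda\).
\item The mean-mask value is \(\lambda b^2/(b^2+\rho\lambda)\).
\end{itemize}
The map \(x\mapsto \lambda b^2/(b^2+x\lambda)\) is strictly convex in \(x\). The two values are its chord and its graph at \(x=\rho\), so they differ strictly whenever \(0<\rho<1\). For any \(L\), the same witness works with \(K=\lambda I\), because all entries then decouple.

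\textbf{Step 3 (general direction).} To show the discrepancy is generic rather than an artifact of \(L=1\), I will use operator convexity of \(A\mapsto A^{-1}\) on positive definite matrices. The map \(D\mapsto (K^{-1}+b^{-2}D)^{-1}\) is affine in \(D\) followed by inversion, so it is operator convex. Jensen's inequality then gives
\[
\mathbb{E}_M\bigl[\Sigma_M\bigr]\succeq \bigl(K^{-1}+b^{-2}\rho I\bigr)^{-1},
\]
and hence \(\overline{E}(\rho)\) is at least the mean-mask trace. This means the mean-mask approximation is optimistic.

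\textbf{Main obstacle.} The difficulty is making strictness general rather than example-based. The strict-convexity statement is: for nonsingular \(A\neq B\), \(\tfrac12(A^{-1}+B^{-1})-(\tfrac{A+B}{2})^{-1}\) is positive semidefinite and nonzero. Therefore equality fails whenever the mask is nondegenerate, that is, whenever \(0<\rho<1\). I will cite this fact rather than rederive it, and I will rely on the explicit witness of Step 2 as the rigorous core of the proof.
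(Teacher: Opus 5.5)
Your proposal is correct, and it is more complete than what the paper provides. The paper gives no separate proof. Its only justification is the embedded remark that matrix inversion is nonlinear, together with the entrywise computation $\mathbb{E}_M[M_j]=\rho$, which is your Step~1. Nonlinearity alone does not show that the Jensen gap is nonzero for this particular affine-then-invert map and this particular mask distribution. Your explicit witness is what actually settles the ``in general'' claim: in the scalar case, or with $K=\lambda I$ for arbitrary $L$, the exact average is the chord and the mean-mask value is the graph of the strictly convex map $x\mapsto\lambda b^2/(b^2+x\lambda)$ at $x=\rho$.

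Your operator-Jensen step adds something the paper never states: the direction $\mathbb{E}_M[\Sigma_M]\succeq(K^{-1}+b^{-2}\rho I)^{-1}$. The mean-mask formula is the complete-observation formula with effective noise variance $b^2/\rho$, and this inequality shows it systematically underestimates both the averaged posterior covariance and $\overline{E}(\rho)$.

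Two small refinements:
\begin{itemize}
\item State the strict midpoint fact for positive definite $A\ne B$, not merely nonsingular ones. Operator convexity and the Loewner order only make sense there.
\item To pass from the two-point statement to the $2^L$-point Bernoulli average, use the strict convexity of $A\mapsto\Tr A^{-1}$ on positive definite matrices. With $X=A^{-1}$ and $Y=B^{-1}$, the midpoint gap $\tfrac12(A^{-1}+B^{-1})-(\tfrac{A+B}{2})^{-1}$ equals $\tfrac12(X-Y)(X+Y)^{-1}(X-Y)$, whose trace is positive unless $X=Y$. Then apply scalar strict Jensen. When $0<\rho<1$, both $D_M=0$ and $D_M=I$ occur with positive probability, so this gives $\overline{E}(\rho)$ strictly above the mean-mask trace for every positive definite admissible $K$, not just for your witness.
\end{itemize}
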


For one-dimensional cyclic masks, observation intervals may also be summarized
by positive integer gaps \(G_1,\ldots,G_m\) between successive observed sites,
with
\begin{equation}
    \sum_{s=1}^{m}G_s=L.
\end{equation}
This gap description is a diagnostic of the realized geometry.  Once \(M\) is
fixed, the posterior covariance remains
\begin{equation}
    \Sigma_M=
    \left(
        K^{-1}+b^{-2}D_M
    \right)^{-1}.
\end{equation}
For \(m=0\), \(\Sigma_M=K\) and \(E(M)=E_{\mathrm{prior}}\).  For \(m\ge1\),
we report
\begin{equation}
    G_{\max}=\max_s G_s,
    \qquad
    p(M)=1-\frac{m}{L},
\end{equation}
together with the normalized trace
\begin{equation}
    \frac{E(M)}{E_{\mathrm{prior}}},
    \qquad
    E_{\mathrm{prior}}=\frac{1}{L}\Tr K.
\end{equation}

In the circumferential-array interpretation, the corresponding largest angular
interval between successive available microphones is
\begin{equation}
    \Delta\theta_{\max}=\frac{2\pi G_{\max}}{L}.
\end{equation}
The pair
\((p(M),G_{\max})\) is a compact reporting diagnostic that complements the
realized-mask posterior covariance.

\section{Synthetic Circumferential Microphone-array Benchmark}

This section recasts the finite mask-geometry calculations as a dimensionless
synthetic benchmark for missing-channel reconstruction in the circumferential
microphone array of Sect.~1.1.  Consider \(L=64\) nominal microphones equally
spaced around one circular-duct cross-section at
\begin{equation}
    \theta_j=\frac{2\pi j}{64},
    \qquad
    \Delta\theta=\frac{2\pi}{64}=5.625^\circ .
\end{equation}
The benchmark is not based on measured duct data, and the covariance parameters
below are not calibrated to a particular acoustic installation.  Its purpose is
to isolate the effect of channel-availability geometry while retaining the
exact finite posterior covariance derived above.

We report the realized-mask trace \(E(M)=L^{-1}\Tr\Sigma_M\), interpreted as the
residual prediction variance averaged over all 64 nominal microphone positions,
the realized unavailable-channel fraction \(p(M)=1-m/L\), and the largest cyclic
gap \(G_{\max}\) between successive usable channels.  In angular units,
\begin{equation}
    \Delta\theta_{\max}
    =\frac{2\pi G_{\max}}{L}
    =5.625^\circ G_{\max}.
\end{equation}
For deterministic comparisons, we also examine the finite Fourier spectrum of
the availability mask.  For Bernoulli summaries, \(\rho\) denotes the
channel-availability rate and \(1-\rho\) is used only as the nominal unavailable
fraction.

For each prescribed parameter setting, the finite procedure is: construct the
admissible covariance, prescribe or generate a channel-availability mask, form
\(H\), compute
\begin{equation}
    \Sigma_M=K-KH^T(HKH^T+b^2I_m)^{-1}HK,
\end{equation}
and report \(E(M)=L^{-1}\Tr\Sigma_M\) together with \(p(M)=1-m/L\) and, when
\(m\ge1\), \(G_{\max}\) or its angular equivalent
\(\Delta\theta_{\max}\).

The covariance \(K\) is the minimum-image Gaussian constructed from
Eq.~\eqref{eq:minimum-image-gaussian}
with \(L=64\), \(a=1\), and \(\ell=3\), used in the admissible range satisfying
Eq.~\eqref{eq:minimum-image-admissibility}.  Here \(a=1\) normalizes the pressure-fluctuation variance to
\(K_{jj}=a^2=1\), while the lattice correlation length \(\ell=3\) corresponds to
the angular correlation length
\begin{equation}
    \ell_\theta=\frac{2\pi\ell}{L}
    \simeq 0.295\ \mathrm{rad}
    \simeq 16.9^\circ .
\end{equation}
The value \(b^2=1\) is interpreted as the normalized measurement-noise
variance.  These dimensionless choices preserve the previous covariance and
all reported numerical results; they are not claimed to constitute an
acoustically calibrated covariance model.

For every realized mask, the posterior covariance is evaluated by the Woodbury
form above.  Figure~\ref{fig:regular-sparse-reconstruction} shows one
posterior-mean reconstruction with 32 evenly distributed usable channels.
Table~\ref{tab:numerical-results} and
Fig.~\ref{fig:posterior-trace-summary} are obtained by generating 100 Bernoulli
channel-availability masks for each \(\rho\) and then computing \(E(M)\)
separately for each realized mask.  Table~\ref{tab:deterministic-mask-geometry}
compares two deterministic masks: the regular alternating mask, in which every
other channel is usable, and the clustered missing block, in which 32 adjacent
channels are unavailable and the other 32 adjacent channels are usable.  The
latter describes contiguous groupwise missingness without assigning it to a
particular failure mechanism.  Figure~\ref{fig:gap-posterior-scatter} uses a
different generation layer from Table~\ref{tab:numerical-results}: it fixes
\(m=32\) and plots fixed-count random availability masks together with the same
two deterministic reference masks.  Thus the Bernoulli average in
Table~\ref{tab:numerical-results} and the fixed-count scatter in
Fig.~\ref{fig:gap-posterior-scatter} are not the same random experiment; both
are reported only to make the finite realized-mask dependence of \(E(M)\)
transparent.  This paragraph fixes only the numerical protocol and adds no new
theory.

The single-realization display in
Fig.~\ref{fig:regular-sparse-reconstruction} identifies the latent pressure
fluctuation, observed noisy microphone values, and posterior mean around the
circumference for one regular alternating mask. Its primary horizontal axis is
the lattice-site coordinate \(j=0,\ldots,63\), and its lower secondary axis uses
\(\theta_j=360^\circ j/64\); hence the sites \(j=0,16,32,48\) correspond to
\(0,90,180,270^\circ\), respectively. The posterior-trace
dependence on channel geometry is summarized in
Tables~\ref{tab:numerical-results} and
\ref{tab:deterministic-mask-geometry} and in
Figs.~\ref{fig:posterior-trace-summary} and
\ref{fig:gap-posterior-scatter}.

Table~\ref{tab:numerical-results} records finite posterior-trace computations
for Bernoulli channel-availability masks.  As the availability rate \(\rho\)
decreases, the nominal unavailable fraction \(1-\rho\) increases and the average
posterior trace increases.  Thus reconstruction uncertainty grows when fewer
microphone channels are usable, consistently with the monotonicity statement in
Proposition~2.  Each unit of \(G_{\max}\) or its standard deviation corresponds
to an azimuthal interval of \(5.625^\circ\).

\begin{table}[tb]

\centering
\caption{
Results from the synthetic circumferential microphone-array benchmark for
Bernoulli channel-availability masks, with \(L=64\), \(a=1\), \(\ell=3\), and
\(b^2=1\).  The averages and standard deviations are over 100 mask
realizations, except that \(\rho=1\) is deterministic and its standard
deviations are therefore zero.  The column \(1-\rho\) is the nominal
unavailable-channel fraction of the Bernoulli generation model; each
realization is evaluated with its own usable-channel count \(m=|\Omega|\) and
realized unavailable fraction \(p(M)=1-m/L\).  Values of \(G_{\max}\) are in
channel intervals; one interval is \(5.625^\circ\).  These are synthetic,
dimensionless results rather than measurements from a duct experiment.
}
\label{tab:numerical-results}
\begin{tabular}{c c c c c c c}
\hline
\(L\) & \(\rho\) & \(1-\rho\) &
\(\langle G_{\max}\rangle\) & \(\operatorname{sd}(G_{\max})\) &
\(\langle E(M)\rangle\) & \(\operatorname{sd}(E(M))\) \\
\hline
64 & 1.0 & 0.0 & 1.00 & 0.00 & 0.1957 & 0.0000 \\
64 & 0.8 & 0.2 & 3.38 & 0.80 & 0.2381 & 0.0133 \\
64 & 0.5 & 0.5 & 6.35 & 1.95 & 0.3488 & 0.0390 \\
64 & 0.2 & 0.8 & 15.60 & 5.77 & 0.6134 & 0.0774 \\
\hline
\end{tabular}

\end{table}

Table~\ref{tab:deterministic-mask-geometry} gives the complementary fixed-rate
comparison.  With \(p(M)=0.5\), the regular alternating mask and the contiguous
32-channel missing block have the same number of usable channels but different
posterior traces: \(0.3116\) and \(0.5643\), respectively.  The latter is about
\(1.8\) times the regular-mask value.  Thus, even with the same 32 unavailable
channels, their circumferential arrangement changes the full-array average
reconstruction uncertainty by a practically substantial factor in this
synthetic benchmark.  Their largest usable-channel gaps are
\(\Delta\theta_{\max}=11.25\) and \(185.625^\circ\), respectively.

\begin{table}[tb]

\centering
\caption{
Deterministic comparison in the synthetic circumferential microphone-array
benchmark with \(L=64\), \(a=1\), \(\ell=3\), and \(b^2=1\).  Both masks have
the same unavailable-channel fraction \(p(M)=1-m/L=0.5\).  The regular mask
uses every other microphone, whereas the clustered mask leaves 32 adjacent
channels unavailable without specifying a failure cause.  The bracketed angle
is \(\Delta\theta_{\max}=2\pi G_{\max}/L\), the largest azimuthal interval
between usable channels.  The residual prediction variance \(E(M)\) is averaged
over all nominal microphone positions.  The values are synthetic and
dimensionless, not measured acoustic data.
}
\label{tab:deterministic-mask-geometry}
\begin{tabular}{c c l c c}
\hline
\(L\) & \(p(M)\) & Channel geometry & \(G_{\max}\) & \(E(M)\) \\
\hline
64 & 0.5 & alternating 32 usable & 2 & 0.3116 \\
64 & 0.5 & contiguous 32 missing & 33 & 0.5643 \\
\hline
\end{tabular}

\end{table}

Figure~\ref{fig:posterior-trace-summary} summarizes the Bernoulli results in
Table~\ref{tab:numerical-results}: the mean residual prediction variance
increases with the nominal unavailable-channel fraction \(1-\rho\).
Figure~\ref{fig:gap-posterior-scatter} keeps \(p(M)=0.5\) fixed and plots
\(E(M)/E_{\mathrm{prior}}\) against \(G_{\max}\), with
the two deterministic masks from Table~\ref{tab:deterministic-mask-geometry} as
references.  The random masks in this scatter are fixed-count random masks with
exactly \(m=32\) usable channels, generated under the constraint \(m/L=0.5\),
rather than Bernoulli masks.  It should be read as a finite diagnostic display:
at the same usable-channel count, the azimuthal gap geometry supplies
information about the posterior trace beyond the unavailable fraction.  In
this finite fixed-count case study, larger angular gaps tend to accompany
larger normalized posterior traces.

The deterministic comparison can also be read directly in Fourier space.  Define
the finite mask components
\begin{equation}
    \widehat{M}_q
    =
    \sum_{j=0}^{L-1} M_j\exp\left(-\frac{2\pi i qj}{L}\right),
    \qquad q\in\mathbb{T}_L .
\end{equation}
Then
\begin{equation}
    (Q_M)_{nn'}=\frac{1}{L}\widehat{M}_{n-n'}
\end{equation}
with the index difference understood modulo \(L\).  The two masks in
Table~\ref{tab:deterministic-mask-geometry} have the same value of \(m\), so
\(Q_M\) has the same eigenvalue list in both cases.  What differs is the mask
spectrum in the Fourier basis in which \(\Lambda\) is diagonal.  For the regular
alternating mask at \(L=64\), the nonzero mask components are concentrated at
\(q=0\) and the Nyquist frequency \(q=32\).  For the clustered block, the mask
spectrum is broad and contains large low-frequency components.  Since the smooth
Gaussian prior has large low-frequency eigenvalues, this difference changes the
inverse trace of \(\Lambda^{-1}+b^{-2}Q_M\) and gives a direct Fourier-space
interpretation of the larger posterior trace in
Table~\ref{tab:deterministic-mask-geometry}.

\section{Discussion}

For the circumferential microphone-array example, the main structural
consequence of the mask-conditioned Fourier formulation is the separation
between the number of usable channels and the Fourier geometry carried by
\(Q_M\).  Prior diagonalization survives in periodic Fourier coordinates,
whereas posterior mode independence generally does not.  The count fixes only
the diagonal entries \((Q_M)_{nn}=m/L\), while the off-diagonal entries
\((Q_M)_{nn'}=L^{-1}\widehat{M}_{n-n'}\) couple prior modes in
\(\Lambda^{-1}+b^{-2}Q_M\).  Missing-channel reconstruction on the idealized
ring is therefore a finite mode-coupling problem rather than a collection of
independent scalar shrinkage problems.

For fixed model parameters and a fixed mask, the posterior covariance
\(\Sigma_M\) is independent of the observed values \(t\).  Hence
\(E(M)=L^{-1}\Tr\Sigma_M\) can be computed once the mask is known and used as a
mask-conditioned uncertainty indicator for comparing missing-channel patterns
within this sensor-ring model.

The synthetic 64-channel benchmark shows this dependence without relying on an
asymptotic argument.  At the same unavailable-channel fraction \(p(M)=0.5\),
Table~\ref{tab:deterministic-mask-geometry} gives posterior traces \(0.3116\)
and \(0.5643\) for the regular alternating mask and the contiguous missing
block.  The latter uncertainty is about \(1.8\) times the former even though
both configurations contain 32 unavailable channels.  Therefore the number of
unavailable microphones alone does not determine reconstruction uncertainty;
their circumferential placement must also be retained.  These values are a
dimensionless synthetic comparison and are not a performance claim for an
operating duct.

The Fourier representation explains the difference.  Although the eigenvalues of
\(Q_M=F D_MF^\ast\) are fixed by the count \(m\), the posterior trace is the
inverse trace of \(\Lambda^{-1}+b^{-2}Q_M\).  It therefore depends on the
orientation of the mask matrix in the Fourier basis selected by the prior
spectrum \(\Lambda\).  This is the finite-matrix mechanism by which mask geometry
changes spectral reconstruction uncertainty.

This viewpoint also clarifies the relation to neighboring methods.  In
Wiener-filter terminology, the complete-observation periodic case is the
independent scalar Fourier-shrinkage limit.  In the sparse-mask problem, the
central object is the non-diagonal mask matrix \(Q_M\) inside the posterior
precision.  Random masks and gap-sensitive quantities are reporting layers
outside the conditional posterior formula: a generation model defines an outer
average over realized masks, and diagnostics such as \((p(M),G_{\max})\) record
geometry beyond the missing rate.

The duct interpretation depends on equally spaced sensors on a circular
cross-section, a rotationally stationary pressure covariance, and homogeneous
independent sensor noise.  For a noncircular duct, an unequally spaced sensor
layout, a circumferentially nonuniform mean flow, or azimuthally varying wall
conditions, the covariance need not be circulant.  Sensor-dependent or
correlated noise likewise replaces the common-noise likelihood used here.
Under those conditions the scalar Fourier-mode structure and the precision
\(\Lambda^{-1}+b^{-2}Q_M\) do not remain valid without modification, although
the general finite-dimensional Gaussian conditioning identity still applies.
Accordingly, the synthetic results should not be generalized to an operating
duct before these assumptions are checked.

\section{Conclusion}

Motivated by missing channels in a circumferential microphone array at one
circular-duct cross-section, we have formulated the finite linear-Gaussian
posterior for an arbitrary realized mask on an equally spaced periodic sensor
ring.  The standard conditioning identity gives the posterior mean and
covariance for each fixed mask; the contribution here is the explicit finite
mask-geometry representation, not a new Gaussian-conditioning theorem or a
duct-mode identification method.

The complete-observation scalar Fourier formula is recovered at \(D_M=I_L\).
Sparse masks introduce \(Q_M=F D_MF^\ast\), whose off-diagonal finite Fourier
components couple modes in \(\Lambda^{-1}+b^{-2}Q_M\).  The matched error becomes
\(E(M)=L^{-1}\Tr\Sigma_M\), a mask-conditioned posterior-trace indicator.  The
figures display this mechanism through the structural Fourier schematic, the
Bernoulli-mask trace summary, and the fixed-rate gap diagnostic.

In the synthetic 64-channel circumferential-array benchmark, the same 50\%
unavailable-channel fraction gives \(E(M)=0.3116\) for alternating missing
channels and \(E(M)=0.5643\) for one contiguous 32-channel missing block.  Thus,
within the stated ring model, the unavailable fraction is a coarse descriptor
and the realized circumferential mask geometry must be retained.

These conclusions are limited to a circular, equally instrumented ring with a
circulant prior covariance and common independent sensor noise.  Noncircular
geometry, unequal sensor spacing, circumferentially nonuniform mean flow, or
sensor-dependent noise generally destroys that circulant covariance or the
common-noise likelihood, so the scalar Fourier-mode structure used here does
not carry over unchanged.  The analysis is therefore an exact finite reference
calculation under explicit idealizations, not a validation of missing-channel
reconstruction performance for fan or compressor ducts in general.

\begin{figure}[p]

\centering
\setlength{\unitlength}{1mm}
\begin{picture}(130,55)
\put(34,25){\circle{32}}
\put(96,25){\circle{32}}
\put(34,25){\makebox(0,0){\scriptsize duct}}
\put(96,25){\makebox(0,0){\scriptsize duct}}
\put(34,49){\makebox(0,0){regular alternating: 32 usable}}
\put(96,49){\makebox(0,0){contiguous block: 32 missing}}
\put(34.00,43.00){\circle*{1.15}}
\put(35.76,42.91){\circle{1.15}}
\put(37.51,42.65){\circle*{1.15}}
\put(39.23,42.22){\circle{1.15}}
\put(40.89,41.63){\circle*{1.15}}
\put(42.49,40.87){\circle{1.15}}
\put(44.00,39.97){\circle*{1.15}}
\put(45.42,38.91){\circle{1.15}}
\put(46.73,37.73){\circle*{1.15}}
\put(47.91,36.42){\circle{1.15}}
\put(48.97,35.00){\circle*{1.15}}
\put(49.87,33.49){\circle{1.15}}
\put(50.63,31.89){\circle*{1.15}}
\put(51.22,30.23){\circle{1.15}}
\put(51.65,28.51){\circle*{1.15}}
\put(51.91,26.76){\circle{1.15}}
\put(52.00,25.00){\circle*{1.15}}
\put(51.91,23.24){\circle{1.15}}
\put(51.65,21.49){\circle*{1.15}}
\put(51.22,19.77){\circle{1.15}}
\put(50.63,18.11){\circle*{1.15}}
\put(49.87,16.51){\circle{1.15}}
\put(48.97,15.00){\circle*{1.15}}
\put(47.91,13.58){\circle{1.15}}
\put(46.73,12.27){\circle*{1.15}}
\put(45.42,11.09){\circle{1.15}}
\put(44.00,10.03){\circle*{1.15}}
\put(42.49,9.13){\circle{1.15}}
\put(40.89,8.37){\circle*{1.15}}
\put(39.23,7.78){\circle{1.15}}
\put(37.51,7.35){\circle*{1.15}}
\put(35.76,7.09){\circle{1.15}}
\put(34.00,7.00){\circle*{1.15}}
\put(32.24,7.09){\circle{1.15}}
\put(30.49,7.35){\circle*{1.15}}
\put(28.77,7.78){\circle{1.15}}
\put(27.11,8.37){\circle*{1.15}}
\put(25.51,9.13){\circle{1.15}}
\put(24.00,10.03){\circle*{1.15}}
\put(22.58,11.09){\circle{1.15}}
\put(21.27,12.27){\circle*{1.15}}
\put(20.09,13.58){\circle{1.15}}
\put(19.03,15.00){\circle*{1.15}}
\put(18.13,16.51){\circle{1.15}}
\put(17.37,18.11){\circle*{1.15}}
\put(16.78,19.77){\circle{1.15}}
\put(16.35,21.49){\circle*{1.15}}
\put(16.09,23.24){\circle{1.15}}
\put(16.00,25.00){\circle*{1.15}}
\put(16.09,26.76){\circle{1.15}}
\put(16.35,28.51){\circle*{1.15}}
\put(16.78,30.23){\circle{1.15}}
\put(17.37,31.89){\circle*{1.15}}
\put(18.13,33.49){\circle{1.15}}
\put(19.03,35.00){\circle*{1.15}}
\put(20.09,36.42){\circle{1.15}}
\put(21.27,37.73){\circle*{1.15}}
\put(22.58,38.91){\circle{1.15}}
\put(24.00,39.97){\circle*{1.15}}
\put(25.51,40.87){\circle{1.15}}
\put(27.11,41.63){\circle*{1.15}}
\put(28.77,42.22){\circle{1.15}}
\put(30.49,42.65){\circle*{1.15}}
\put(32.24,42.91){\circle{1.15}}
\put(96.00,43.00){\circle*{1.15}}
\put(97.76,42.91){\circle*{1.15}}
\put(99.51,42.65){\circle*{1.15}}
\put(101.23,42.22){\circle*{1.15}}
\put(102.89,41.63){\circle*{1.15}}
\put(104.49,40.87){\circle*{1.15}}
\put(106.00,39.97){\circle*{1.15}}
\put(107.42,38.91){\circle*{1.15}}
\put(108.73,37.73){\circle*{1.15}}
\put(109.91,36.42){\circle*{1.15}}
\put(110.97,35.00){\circle*{1.15}}
\put(111.87,33.49){\circle*{1.15}}
\put(112.63,31.89){\circle*{1.15}}
\put(113.22,30.23){\circle*{1.15}}
\put(113.65,28.51){\circle*{1.15}}
\put(113.91,26.76){\circle*{1.15}}
\put(114.00,25.00){\circle*{1.15}}
\put(113.91,23.24){\circle*{1.15}}
\put(113.65,21.49){\circle*{1.15}}
\put(113.22,19.77){\circle*{1.15}}
\put(112.63,18.11){\circle*{1.15}}
\put(111.87,16.51){\circle*{1.15}}
\put(110.97,15.00){\circle*{1.15}}
\put(109.91,13.58){\circle*{1.15}}
\put(108.73,12.27){\circle*{1.15}}
\put(107.42,11.09){\circle*{1.15}}
\put(106.00,10.03){\circle*{1.15}}
\put(104.49,9.13){\circle*{1.15}}
\put(102.89,8.37){\circle*{1.15}}
\put(101.23,7.78){\circle*{1.15}}
\put(99.51,7.35){\circle*{1.15}}
\put(97.76,7.09){\circle*{1.15}}
\put(96.00,7.00){\circle{1.15}}
\put(94.24,7.09){\circle{1.15}}
\put(92.49,7.35){\circle{1.15}}
\put(90.77,7.78){\circle{1.15}}
\put(89.11,8.37){\circle{1.15}}
\put(87.51,9.13){\circle{1.15}}
\put(86.00,10.03){\circle{1.15}}
\put(84.58,11.09){\circle{1.15}}
\put(83.27,12.27){\circle{1.15}}
\put(82.09,13.58){\circle{1.15}}
\put(81.03,15.00){\circle{1.15}}
\put(80.13,16.51){\circle{1.15}}
\put(79.37,18.11){\circle{1.15}}
\put(78.78,19.77){\circle{1.15}}
\put(78.35,21.49){\circle{1.15}}
\put(78.09,23.24){\circle{1.15}}
\put(78.00,25.00){\circle{1.15}}
\put(78.09,26.76){\circle{1.15}}
\put(78.35,28.51){\circle{1.15}}
\put(78.78,30.23){\circle{1.15}}
\put(79.37,31.89){\circle{1.15}}
\put(80.13,33.49){\circle{1.15}}
\put(81.03,35.00){\circle{1.15}}
\put(82.09,36.42){\circle{1.15}}
\put(83.27,37.73){\circle{1.15}}
\put(84.58,38.91){\circle{1.15}}
\put(86.00,39.97){\circle{1.15}}
\put(87.51,40.87){\circle{1.15}}
\put(89.11,41.63){\circle{1.15}}
\put(90.77,42.22){\circle{1.15}}
\put(92.49,42.65){\circle{1.15}}
\put(94.24,42.91){\circle{1.15}}
\put(65,1.5){\makebox(0,0){\scriptsize normal/usable: filled; unavailable: open}}
\end{picture}
\caption{
Mask-geometry schematic for the synthetic circumferential microphone-array
benchmark.  Each circular-duct cross-section contains 64 equally spaced
microphones at angular intervals of \(5.625^\circ\).  The left panel makes
every other channel usable; the right panel leaves one contiguous group of 32
channels unavailable.  Filled markers denote normal, usable channels and open
markers denote unavailable channels.  The cyclic ordering joins the top
channel to itself after one full turn.  The right panel represents groupwise
missingness without attributing it to a particular failure mechanism.
}
\label{fig:mask-geometry}

\end{figure}

\begin{figure}[p]

\centering
\begin{tabular}{c@{\qquad}c}
\fbox{%
\begin{minipage}{0.39\linewidth}
\centering
full observation\\[1mm]
\(D_M=I_L,\; Q_M=I_L\)\\[1mm]
\[
\widetilde{\Sigma}_{\mathrm{full}}^{-1}
=\Lambda^{-1}+b^{-2}I_L
\]
\[
\begin{pmatrix}
\ast&0&0&0\\
0&\ast&0&0\\
0&0&\ast&0\\
0&0&0&\ast
\end{pmatrix}
\]
\end{minipage}}
&
\fbox{%
\begin{minipage}{0.39\linewidth}
\centering
sparse mask\\[1mm]
\(Q_M=F D_M F^{\ast}\)\\[1mm]
\[
\widetilde{\Sigma}_M^{-1}
=\Lambda^{-1}+b^{-2}Q_M
\]
\[
\begin{pmatrix}
\ast&\ast&0&\ast\\
\ast&\ast&\ast&0\\
0&\ast&\ast&\ast\\
\ast&0&\ast&\ast
\end{pmatrix}
\]
\end{minipage}}
\\[2mm]
diagonal posterior precision & coupled posterior precision
\end{tabular}
\caption{
Fourier-space mode-coupling schematic.  Under complete observation,
\(D_M=I_L\) gives \(Q_M=I_L\), so the Fourier-space posterior precision is
diagonal.  Under a sparse mask, \(Q_M=F D_M F^{\ast}\) is generally
non-diagonal and couples Fourier modes.  The schematic summarizes the structural
distinction between complete and sparse observation.
}
\label{fig:fourier-mode-coupling}

\end{figure}

\begin{figure}[p]

\centering
\includegraphics[width=0.82\linewidth]{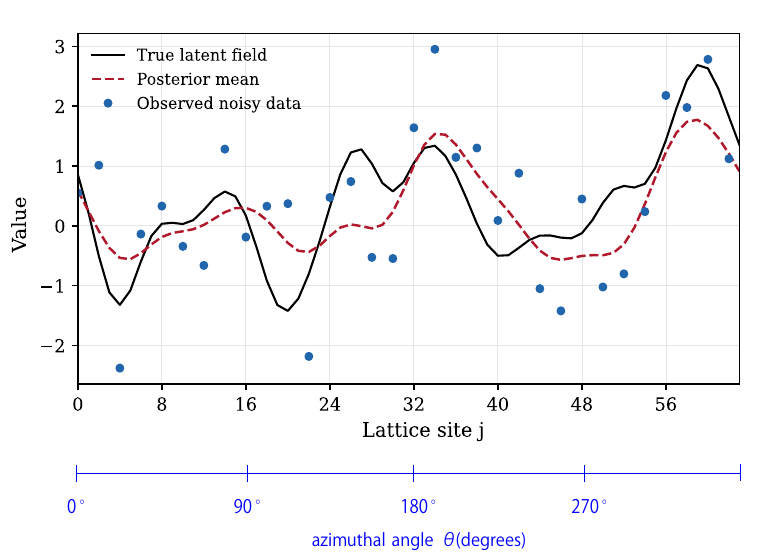}
\caption{
(Color online) Single-realization reconstruction for the synthetic circumferential
microphone-array benchmark with \(L=64\). The solid and dashed curves show
the latent pressure fluctuation \(S(\theta_j)\) and posterior mean
\(\mu_{M,j}\), respectively, and the markers show the observed noisy microphone
values \(t_j\) at the 32 usable sites of a regular alternating mask with
\(p(M)=0.5\). The primary horizontal axis gives the lattice-site index
\(j=0,\ldots,63\), and the lower axis gives the azimuthal angle
\(\theta_j=360^\circ j/64\); thus \(0\), \(90\), \(180\),
and \(270^\circ\) align with \(j=0\), \(16\), \(32\), and \(48\),
respectively. The data are synthetic and dimensionless.
}
\label{fig:regular-sparse-reconstruction}

\end{figure}

\begin{figure}[p]

\centering
\includegraphics[width=0.72\linewidth]{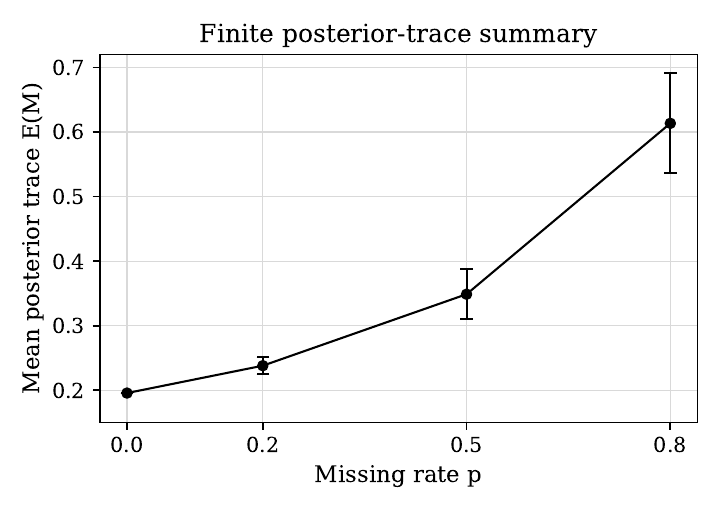}
\caption{
Posterior-trace summary for the synthetic circumferential microphone-array
benchmark based on Table~\ref{tab:numerical-results}.  The points show the mean
full-array residual prediction variance \(\langle E(M)\rangle\), and the vertical
bars show \(\operatorname{sd}(E(M))\), for \(L=64\), \(a=1\), \(\ell=3\), and
\(b^2=1\).  The horizontal coordinate is the nominal Bernoulli
unavailable-channel fraction \(1-\rho\); each posterior trace is computed after
a finite availability mask is realized.  The results are synthetic and
dimensionless, not measured acoustic data.
}
\label{fig:posterior-trace-summary}

\end{figure}

\begin{figure}[p]

\centering
\includegraphics[width=0.72\linewidth]{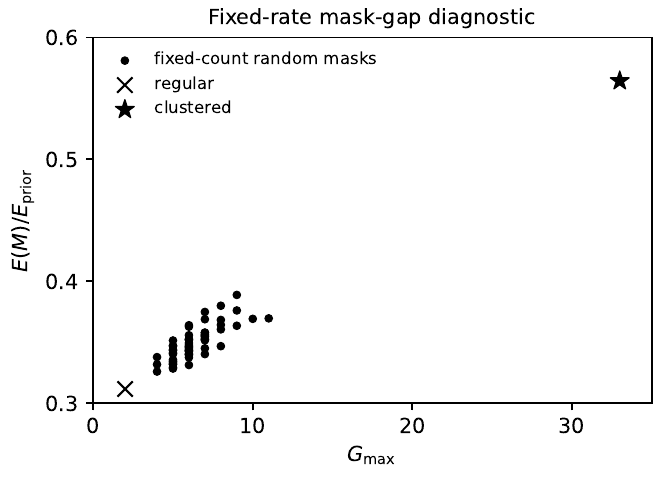}
\caption{
Gap-diagnostic scatter for the synthetic circumferential microphone-array
benchmark with \(L=64\), \(m=32\) (\(p(M)=1-m/L=0.5\)), \(a=1\), \(\ell=3\),
and \(b^2=1\).  Small filled circles are fixed-count random-mask realizations
with exactly \(m=32\) usable channels.  The regular alternating mask and the
contiguous 32-channel missing block from
Table~\ref{tab:deterministic-mask-geometry} are included as cross and star
reference symbols.  One unit on the \(G_{\max}\) axis corresponds to
\(5.625^\circ\), so the horizontal coordinate also represents the largest
azimuthal interval between usable microphones.  This synthetic, dimensionless
display indicates that larger angular gaps tend to accompany larger normalized
posterior traces in the finite fixed-count case study.
}
\label{fig:gap-posterior-scatter}

\end{figure}

\clearpage
\appendix

\section{Zero-eigenvalue Convention}
\label{app:zero-eigenvalue}

This appendix fixes the support convention for positive-semidefinite covariance
matrices.  The main theorem is stated for the positive-definite case, and the
convention below extends the notation to finite covariance supports.

The main text writes inverse-covariance expressions for the strictly positive
case.  If an admissible finite periodic covariance has zero eigenvalues, then
\(K\) is positive semidefinite and the Gaussian prior is supported on the range
of \(K\).  In that case, inverse expressions are interpreted on this support.  Equivalently, one may replace \(K^{-1}\) by the Moore--Penrose
pseudoinverse \(K^+\) and restrict the posterior calculation to the covariance
support.

Let
\begin{equation}
    K=F^\ast \Lambda F,
    \qquad
    \Lambda=\diag(\lambda_0,\ldots,\lambda_{L-1}),
\end{equation}
with \(\lambda_n\ge0\).  The support is spanned by the Fourier modes with
\(\lambda_n>0\).  On this support, \(\Lambda^{-1}\) is understood as
\begin{equation}
    \lambda_n^{-1}
    \quad\text{for}\quad \lambda_n>0.
\end{equation}
Zero-variance modes are deterministic under the prior and have zero posterior
variance.  The posterior covariance formula is applied on the positive-variance
subspace, and the covariance in zero-variance directions remains zero.

Thus the positive-semidefinite case is the same finite Gaussian posterior
calculation restricted to the covariance support.

\section{Woodbury Identity Form}
\label{app:woodbury-identity}

This appendix records the observation-space implementation form with an
\(m\times m\) inverse.  The identity is the standard Woodbury matrix identity,
or matrix inversion lemma, in the form used for finite linear-Gaussian
conditioning \cite{Harville1997,GolubVanLoan2013}.

For positive-definite \(K\) and \(b^2>0\),
\begin{equation}
\left(K^{-1}+b^{-2}H^TH\right)^{-1}
=K-KH^T(HKH^T+b^2I_m)^{-1}HK.
\end{equation}
Multiplying the right-hand side by \(K^{-1}+b^{-2}H^TH\) verifies the identity.
The form is useful because the inverse inside the observation-space expression
has dimension \(m\times m\), whereas the latent-space precision has dimension
\(L\times L\).

The corresponding posterior mean is
\begin{equation}
    \mu_M=KH^T(HKH^T+b^2I_m)^{-1}t.
\end{equation}
Thus the Woodbury form is algebraically the same posterior written in
observation-space coordinates.

\section{Coordinate-free Finite Operator Form}
\label{app:dimension-independent-form}

This appendix records the coordinate-free finite-dimensional algebra behind the
posterior identity used in the main text.  It is not a separate dimensional
extension of the one-dimensional periodic-lattice formulation.

Let a finite latent Gaussian vector \(S\in\mathbb{R}^{N}\) have covariance
\(K\), and let observations be selected by an arbitrary finite operator
\(H\in\mathbb{R}^{m\times N}\):
\begin{equation}
    t=HS+\xi,
    \qquad
    \xi\sim\mathcal{N}(0,b^2I_m).
\end{equation}
Then the same finite linear-Gaussian posterior identity gives
\begin{equation}
    \Sigma
    =
    \left(K^{-1}+b^{-2}H^TH\right)^{-1}
    =
    K-KH^T(HKH^T+b^2I_m)^{-1}HK.
\end{equation}
If a finite unitary coordinate transform \(U\) is chosen, for example the
finite unitary Fourier matrix corresponding to a finite periodic model, then
\begin{equation}
    U\Sigma^{-1}U^\ast
    =
    UK^{-1}U^\ast+b^{-2}UH^THU^\ast .
\end{equation}
When \(UKU^\ast\) is diagonal, the first term is diagonal and the second term is
the transformed mask precision.  A sparse observation operator generally makes
\(UH^THU^\ast\) non-diagonal, so the same finite algebra produces
coordinate-mode coupling in the transformed representation.

This appendix therefore records only a form-preserving finite operator identity
in finite dimensions; it does not assert a continuum limit, a large-system
limit, or a sampling-geometry theorem.


\begin{thebibliography}{99}

\bibitem{RasmussenWilliams}
C. E. Rasmussen and C. K. I. Williams, \textit{Gaussian Processes for Machine Learning}
(MIT Press, Cambridge, MA, 2006).

\bibitem{Bishop}
C. M. Bishop, \textit{Pattern Recognition and Machine Learning}
(Springer, New York, 2006).

\bibitem{MacKay}
D. J. C. MacKay, \textit{Information Theory, Inference, and Learning Algorithms}
(Cambridge University Press, Cambridge, U.K., 2003).


\bibitem{Joppa1987}
P. D. Joppa, \href{https://doi.org/10.2514/3.45482}{J. Aircr. \textbf{24}, 587 (1987)}.

\bibitem{WangEtAl2014}
L.-F. Wang, W.-Y. Qiao, L. Ji, and S.-Y. Yu,
\href{https://doi.org/10.13224/j.cnki.jasp.2014.04.024}{J. Aerosp. Power
\textbf{29}, 917 (2014)}.

\bibitem{WangEtAl2023}
R. Wang, M. Yu, Y. Bai, L. Yu, and G. Dong,
\href{https://doi.org/10.2514/1.J062596}{AIAA J. \textbf{61},
4018 (2023)}.


\bibitem{Cressie1993}
N. Cressie, \textit{Statistics for Spatial Data}
(Wiley, New York, 1993) revised ed.

\bibitem{Stein1999}
M. L. Stein, \textit{Interpolation of Spatial Data: Some Theory for Kriging}
(Springer, New York, 1999).

\bibitem{Wiener1949}
N. Wiener, \textit{Extrapolation, Interpolation, and Smoothing of Stationary
Time Series}
(MIT Press, Cambridge, MA, 1949).

\bibitem{OppenheimSchafer2010}
A. V. Oppenheim and R. W. Schafer, \textit{Discrete-Time Signal Processing}
(Prentice-Hall, Upper Saddle River, NJ, 2010) 3rd ed.

\bibitem{CandesWakin2008}
E. J. Cand\`es and M. B. Wakin,
\href{https://doi.org/10.1109/MSP.2007.914731}{IEEE Signal Process. Mag. \textbf{25},
21 (2008)}.

\bibitem{Tsuzurugi2025}
J. Tsuzurugi, \href{https://doi.org/10.7566/JPSJ.94.104801}{J. Phys. Soc. Jpn. \textbf{94}, 104801 (2025)}.

\bibitem{Harville1997}
D. A. Harville, \textit{Matrix Algebra from a Statistician's Perspective}
(Springer, New York, 1997).

\bibitem{GolubVanLoan2013}
G. H. Golub and C. F. Van Loan, \textit{Matrix Computations}
(Johns Hopkins University Press, Baltimore, MD, 2013) 4th ed.

\end{thebibliography}
\end{document}